\documentclass[10pt,twocolumn]{article}

\usepackage[utf8]{inputenc}
\usepackage[T1]{fontenc}
\usepackage{lmodern}
\usepackage[margin=0.75in]{geometry}
\usepackage{amsmath,amssymb,amsthm}
\usepackage{graphicx}
\usepackage{booktabs}
\usepackage{longtable}
\usepackage{tabularx}
\usepackage{multirow}
\usepackage{hyperref}
\usepackage{cleveref}
\usepackage{enumitem}
\usepackage{xcolor}
\usepackage{listings}
\usepackage{float}
\usepackage{caption}
\usepackage{subcaption}
\usepackage{url}
\usepackage{xurl}  
\usepackage{microtype}
\usepackage{pifont}
\usepackage{wasysym}
\usepackage{fancyhdr}
\usepackage{titlesec}

\hypersetup{
  colorlinks=true,
  linkcolor=blue!70!black,
  citecolor=green!50!black,
  urlcolor=blue!70!black,
}

\newtheorem{theorem}{Theorem}
\newtheorem{definition}{Definition}
\newtheorem{property}{Property}
\newtheorem{observation}{Observation}

\titlespacing*{\section}{0pt}{1.5ex plus 0.5ex minus 0.2ex}{1ex plus 0.2ex}
\titlespacing*{\subsection}{0pt}{1.2ex plus 0.4ex minus 0.2ex}{0.8ex plus 0.2ex}

\title{\textbf{Governing Dynamic Capabilities: Cryptographic Binding and\\Reproducibility Verification for AI Agent Tool Use}}

\author{
  Ziling Zhou\\
  Genupixel Technology Pte.\ Ltd.\\
  \texttt{ziling@genupixel.com}
}

\date{}

\begin{document}
\maketitle
\thispagestyle{empty}

\begin{abstract}
AI agents now dynamically acquire tools, orchestrate sub-agents, and transact across organizational boundaries---yet no existing security layer can verify \emph{what} an agent is capable of, \emph{whether} it executed what it claims, or \emph{what actually happened} in a multi-agent interaction. We trace this governance vacuum to a missing architectural distinction: the \textbf{capability-context separation}. Inside a transformer's forward pass, tool definitions and user context are indistinguishable token sequences; at the orchestration layer, they have fundamentally different security semantics---tool definitions determine which real-world actions are \emph{possible} (and change infrequently), while runtime context determines which actions are \emph{chosen} (and changes per interaction). Existing frameworks conflate the two, enabling \emph{silent capability escalation}---agents acquiring tools without invalidating credentials---and leaving cross-agent interactions without verifiable provenance.

From this principle, we derive three \textbf{Agent Governance Requirements}: capability integrity (G1) governs the capability envelope, behavioral verifiability (G2) ensures agents executed their declared computational process, and interaction auditability (G3) provides tamper-evident records for the runtime context---together defining \emph{what} an agent ecosystem must enforce, independent of \emph{how}. We prove two structural results with prescriptive architectural consequences: the \textbf{Chain Verifiability Theorem}, establishing that behavioral verification is a chain property (one unverifiable interior agent breaks end-to-end verification for all downstream nodes), and the \textbf{Bounded Divergence Theorem}, which transforms replay-based verification into a probabilistic safety certificate ($\epsilon \leq 1 - \alpha^{1/n}$). We validate the framework with two crypto-agnostic instantiations---basic (Ed25519, SHA-256, hash chains; 97\,\textmu s verify) and enhanced (BBS+ selective disclosure, Groth16 DV-SNARK; 13.8\,ms verify)---both satisfying the same nine security properties. A multi-provider reproducibility study (9 models, 7 providers) reveals $5.8\times$ variance in inference determinism, directly connecting model characteristics to governance architecture. End-to-end evaluation over 5--20 agent pipelines with real LLM calls confirms $<$0.02\% governance overhead, real-time detection of 7 end-to-end attack scenarios (covering G1, G2, and G3) with zero false positives, and post-hoc forensic tracing of runtime behavioral attacks validated against the Nemotron-AIQ dataset.
\end{abstract}

\section{Introduction}

\subsection{Agent Governance: A Missing Foundation}

AI agents are no longer isolated tools. Through protocols like MCP~\cite{mcp_spec} and A2A~\cite{a2a_spec}, modern agents dynamically discover tools, invoke external APIs, and orchestrate other agents at runtime. As these agents assume real-world responsibilities---managing credentials, executing financial transactions, accessing sensitive data---a fundamental question emerges: \emph{what properties must an agent ecosystem satisfy for humans to trust it?}

This question has been raised from multiple directions: regulatory mandates for traceability and conformity assessment (EU AI Act, Articles 12, 43, 72), standards initiatives identifying security controls and governance oversight~\cite{nist_caisi,trism}, and research calling for secure interaction protocols~\cite{open_challenges}, transparency~\cite{cheng_safe}, formal safety properties~\cite{allegrini_formal}, and operational safety frameworks~\cite{ghosh_framework}. Yet \textbf{no existing framework translates these aspirations into enforceable cryptographic infrastructure}. This is a \textbf{fundamental architectural gap}: current agents acquire tools dynamically, execute arbitrary models, and interact without verifiable records.

We propose three \emph{Agent Governance Requirements}---the \textbf{minimum conditions} under which an agent ecosystem can be considered \emph{governed}:

\begin{definition}[Agent Governance Requirements]
\label{def:governance}
A \emph{governed agent ecosystem} satisfies three properties:

\textbf{G1: Capability Integrity.} Every agent's identity is cryptographically bound to its complete capability set---the model it executes and every tool it can invoke. The identity is unforgeable and non-repudiable. Any change to the capability set invalidates existing credentials and requires explicit re-authorization by a human or organizational principal.

\textbf{G2: Behavioral Verifiability (Inference Authenticity).} An agent's declared computational process can be independently verified---ensuring that the agent actually executed what its credentials claim. This requirement is \emph{implementation-agnostic}: it may be satisfied by hardware attestation, software-based replay, or zero-knowledge proofs.

\textbf{G3: Interaction Auditability.} All inter-agent and agent-human interactions produce tamper-evident records sufficient for forensic reconstruction, allowing humans to audit agent behavior, attribute responsibility, and provide evidence for regulatory compliance and system improvement.
\end{definition}

These requirements reflect the consensus of legal~\cite{hannecke,future_society}, regulatory~\cite{nist_caisi}, and technical~\cite{open_challenges,trism,cheng_safe,allegrini_formal,ghosh_framework} communities. What is novel is the claim that \emph{all three must hold simultaneously} and that \emph{none is achievable with existing infrastructure}. Table~\ref{tab:governance_gap} maps each requirement to the 12 attack scenarios in \S\ref{sec:threat}: no single requirement, nor any pair, suffices. Only G1$\wedge$G2$\wedge$G3 closes all 12 vectors.

\begin{table}[t]
\centering
\caption{Governance gap analysis: undefended attacks when one or two governance requirements are absent. ATK-9 (blame shifting) requires both G2 and G3 for full resolution. All 12 attacks are defined in \S\ref{sec:threat}.}
\label{tab:governance_gap}
\small
\begin{tabular}{@{}lll@{}}
\toprule
\textbf{Missing} & \textbf{Undefended Attacks} & \textbf{\#} \\
\midrule
\multicolumn{3}{@{}l}{\emph{Single requirement absent}} \\
$\neg$G1 & ATK-1,3,4,5,6,7 & 6 \\
$\neg$G2 & ATK-2,9,12 & 3 \\
$\neg$G3 & ATK-8,9,10,11 & 4 \\
\midrule
\multicolumn{3}{@{}l}{\emph{Two requirements absent}} \\
$\neg$G1,$\neg$G2 & ATK-1,2,3,4,5,6,7,9,12 & 9 \\
$\neg$G1,$\neg$G3 & ATK-1,3,4,5,6,7,8,9,10,11 & 10 \\
$\neg$G2,$\neg$G3 & ATK-2,8,9,10,11,12 & 6 \\
\midrule
G1$\wedge$G2$\wedge$G3 & \emph{none} & 0 \\
\bottomrule
\end{tabular}
\end{table}

A critical corollary: \textbf{even hardware TEEs~\cite{sgx_explained,arm_trustzone,nvidia_cc} do not eliminate the need for G1 and G3}. A TEE-attested agent that silently acquires new tools remains a security threat (G1 violation); a TEE-attested pipeline without tamper-evident records remains unauditable (G3 violation). Our framework provides the governance layer that TEE, by itself, cannot.

\subsection{Why the Current Landscape Falls Short}
\label{sec:intro_landscape}

The current agent security landscape maps into four layers (Figure~\ref{fig:four_layer}), none satisfying G1--G3:
\emph{Identity platforms}~\cite{keyfactor,spiffe_hashicorp,spiffe_solo} bind agents to keys but certificates remain valid after capability changes (\textbf{G1 violated}).
\emph{Authorization frameworks}~\cite{oidc_a,south_delegation,cerbos} evaluate permissions at grant time without detecting drift (\textbf{G1 violated}).
\emph{Runtime gateways}~\cite{gravitee,lasso} inspect tool calls statelessly (\textbf{G2, G3 violated}).
\emph{Governance platforms}~\cite{credo_ai} maintain registries without cryptographic enforcement (\textbf{G1 violated}).

\begin{figure}[t]
  \centering
  \includegraphics[width=\columnwidth]{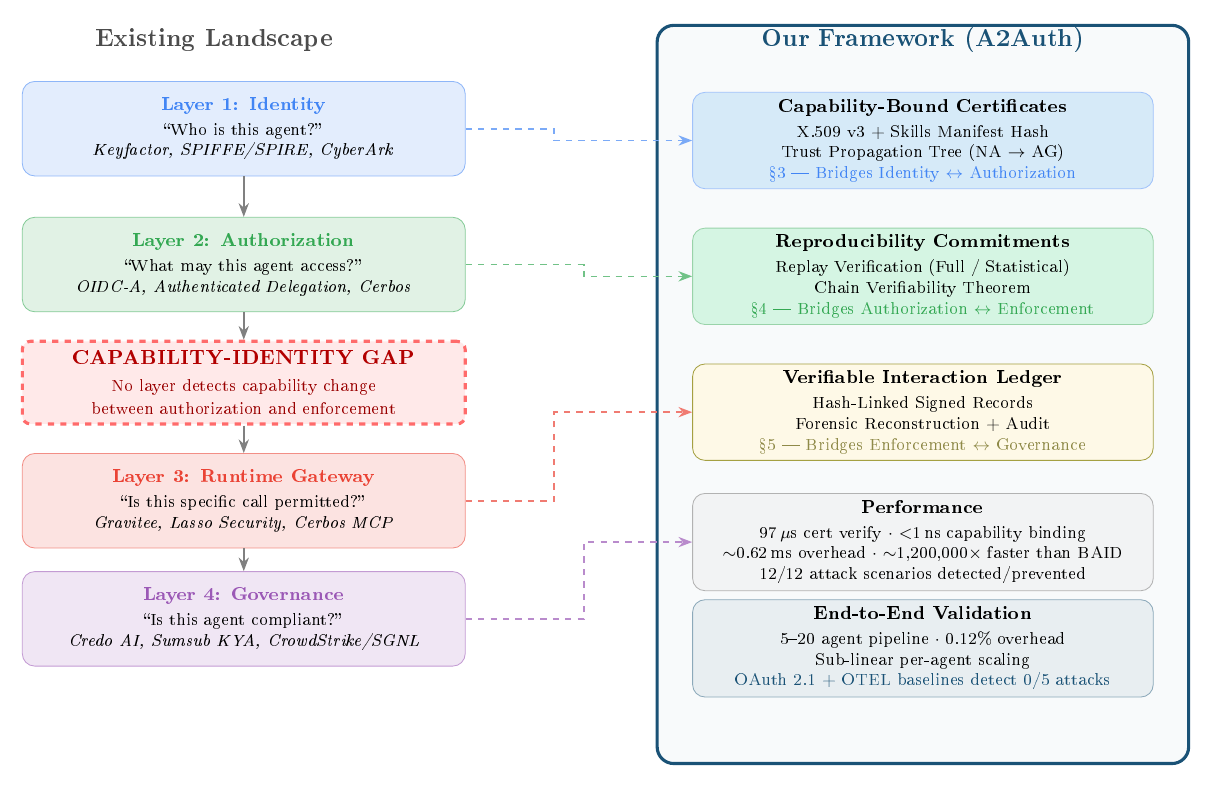}
  \caption{The four-layer agent security landscape and the capability-identity gap. Existing layers partially address individual requirements but none bridges them. Our framework provides the missing cryptographic governance layer satisfying G1--G3, validated end-to-end across 5--20 agent pipelines.}
  \label{fig:four_layer}
\end{figure}

The root cause is a failure to recognize a fundamental architectural distinction. Inside a transformer's forward pass, tool definitions and user context are indistinguishable token sequences; but at the orchestration layer, \textbf{tool definitions determine which real-world actions are possible}, while \textbf{runtime context determines which actions are chosen}. This yields the \emph{capability-context separation}: the \emph{capability envelope} (model identity, tool definitions, permission scopes) changes infrequently and requires re-authorization (G1); the \emph{runtime context} (conversation history, user instructions) changes per-interaction and requires verifiability (G2) and auditability (G3).

Existing frameworks conflate the two---a tool addition and a new user message are treated identically. On the capability side, this creates the \textbf{capability-identity gap}: no existing framework detects \emph{when an agent's capabilities have changed}, enabling \emph{silent capability escalation}. On the context side, it means cross-agent interactions lack verifiable provenance---no framework provides tamper-evident records linking \emph{which agent did what, with which inputs, producing which outputs}---the foundation for accountability in multi-agent workflows.

To our knowledge, the capability-context separation is the first architectural abstraction that explains \emph{why} agent security frameworks systematically fail to address capability drift---and it is not merely descriptive but \emph{prescriptive}: it dictates the governance boundaries any secure agent system must enforce. As a concrete illustration, BAID~\cite{baid} achieves strong code-level authentication via zkVM, binding agent identity to a program commitment $C_P = \mathrm{CommitProg}(P)$. Yet $C_P$ does not encompass dynamically acquired tools: an agent's program binary remains unchanged when new tools are discovered via MCP~\cite{mcp_spec} or plugin registries, leaving the \emph{entire capability envelope} outside BAID's security perimeter. This is not an implementation oversight but a direct consequence of designing without the capability-context separation---the framework conflates code identity with capability identity. Our G1 requirement, derived from this abstraction, prescribes that governance must commit to the \emph{complete capability set} (model + tools + permissions), not merely the executing code.

\subsection{Our Framework: Bridging the Gap}

The capability-context separation provides the foundation for three enforcement mechanisms that instantiate the governance contracts G1--G3:

\textbf{Capability-Bound Agent Certificates (\S\ref{sec:certificates}) $\boldsymbol{\to}$ G1.} We extend X.509 v3 with a \emph{skills manifest hash}---a SHA-256 commitment to the agent's complete tool configuration---embedded within a \emph{trust propagation tree} where human principals form the roots and trust constraints propagate monotonically downward.

\textbf{Inference Authenticity (\S\ref{sec:reproducibility}) $\boldsymbol{\to}$ G2.} Two paths: \emph{hardware} (TEE attestation~\cite{sgx_explained,arm_trustzone}) and \emph{software} (reproducibility-based replay leveraging LLM inference near-determinism~\cite{difr,eigenai}---a verifier re-executes with recorded inputs and flags divergence beyond the agent's declared threshold).

\textbf{Verifiable Interaction Ledger (\S\ref{sec:ledger}) $\boldsymbol{\to}$ G3.} Interaction records containing agent IDs, certificate hashes, content commitments, and reproducibility anchors are linked by a hash chain with per-record signatures, storing only cryptographic commitments to preserve privacy.

Together, these mechanisms provide a \emph{stateful, cryptographic governance layer} satisfying G1--G3.

\textbf{Crypto-agnostic design.} G1--G3 define \emph{what} to enforce, not \emph{how}. The framework is \textbf{crypto-agnostic}: each requirement specifies a functional contract with pluggable cryptographic primitives---analogous to TLS cipher suite negotiation. We present a \emph{basic} instantiation (Ed25519, SHA-256, hash chains; ${\sim}97$\,\textmu s verification) and an \emph{enhanced} instantiation (BBS+ selective disclosure for G1, DV-SNARK zero-knowledge verification for G2; 13.8\,ms). Both satisfy G1--G3 with identical security semantics, differing only in privacy properties and cost. This separation of governance logic from cryptographic mechanism demonstrates that agent governance is a \emph{composable architecture} where organizations select backends matching their threat model.

\subsection{Threat Model and Security Guarantees}

We consider a PPT adversary who can compromise up to $k$ agent nodes, create malicious agents, inject or modify tools, manipulate network traffic (Dolev-Yao model), and perform \emph{silent capability escalation}; it cannot compromise non-agent nodes or break standard cryptographic primitives. Under this model, we establish seven security properties and two structural theorems: \emph{Capability Binding} (Property~\ref{thm:capability_binding}), \emph{Trust Containment} (Property~\ref{thm:trust_containment}), \emph{Chain Integrity} (Property~\ref{thm:chain_integrity}), and \emph{Delegation Depth} (Property~\ref{thm:depth}) enforce G1; \emph{Reproducibility Soundness} (Property~\ref{thm:reproducibility}), \emph{Bounded Divergence} (Theorem~\ref{thm:bounded_divergence_proof}), and \emph{Chain Verifiability} (Theorem~\ref{thm:chain_verifiability_proof}) enforce G2; \emph{Credential Isolation} (Property~\ref{thm:isolation}) and \emph{Ledger Integrity} (Property~\ref{thm:ledger}) enforce G3.

\subsection{Contributions}

Our contribution is a new \emph{governance architecture}---identifying \emph{what} must be governed, \emph{where} governance boundaries belong, and \emph{why} a specific composition is both necessary and sufficient---not new cryptographic primitives. This follows the pattern of Certificate Transparency~\cite{ct_laurie}, which composed Merkle trees and gossip protocols (both well-known) into infrastructure that fundamentally changed web PKI operation. The insight was in identifying \emph{what to compose and where to deploy it}, not in new primitives; CT was published at IEEE S\&P precisely because the architectural contribution---public auditability of CA behavior---was independently valuable. We make an analogous claim: the capability-context separation, the three governance requirements, and the structural results that connect them are independently valuable regardless of the specific cryptographic instantiation. Concretely:
\textbf{1.\ A new vulnerability class and architectural abstraction.}
We identify \emph{silent capability escalation}---agents acquiring or modifying tools at runtime without triggering existing security mechanisms (\S\ref{sec:intro_landscape})---and trace it to the \textbf{capability-context separation}: tool definitions and user context are indistinguishable inside a transformer but have fundamentally different security semantics at the orchestration layer~\cite{saltzer84,hardy88}. To our knowledge, this is the first architectural abstraction that identifies \emph{where} governance boundaries must be placed in agent systems and \emph{why} existing frameworks---including those with strong cryptographic foundations~\cite{baid}---systematically miss capability drift.

\textbf{2.\ Structural results with architectural consequences.}
The \textbf{Chain Verifiability Theorem} (\S\ref{sec:chain_verifiability}) shows that a single unverifiable interior agent breaks end-to-end behavioral verification, prescribing high-determinism models at interior trust-tree positions.
The \textbf{Bounded Divergence Theorem} (\S\ref{sec:indistinguishability}) establishes that an adversary passing $n$ replay checks is $(n, \epsilon)$-indistinguishable from the declared model ($\epsilon \leq 1 - \alpha^{1/n}$), framing software G2 as a probabilistic approximation of TEE.
The \textbf{Agent Governance Trilemma} (\S\ref{sec:trilemma}) proves that capability, performance, and security cannot be simultaneously maximized for Turing-complete agents (via Rice's theorem).

\textbf{3.\ Empirical discovery connecting model determinism to chain security.}
A reproducibility study (9 models, 7 providers, 15{,}120 comparisons) reveals $5.8\times$ variance in inference determinism, which---combined with Chain Verifiability---determines which trust-tree positions a model qualifies for.

\textbf{4.\ Two crypto-agnostic instantiations.}
Basic (Ed25519, SHA-256; 97\,\textmu s) and enhanced (BBS+, DV-SNARK; 13.8\,ms) instantiations both satisfy nine security properties (\S\ref{sec:proofs}). End-to-end evaluation over 5--20 agent pipelines with real LLM calls confirms $<$0.02\% governance overhead and real-time detection of 7 E2E attack scenarios (covering G1, G2, and G3) with zero false positives (\S\ref{sec:e2e}). An additional 9 structural governance attacks are validated as automated unit tests, and 3 runtime behavioral attacks are validated via post-hoc forensic tracing against Nemotron-AIQ traces (\S\ref{sec:eval}).

\section{Threat Model and Motivating Attacks}
\label{sec:threat}

\subsection{System Model}

We model the trust infrastructure as a rooted directed tree $T = (V, E, r, \tau)$, where $V$ is the set of all entities (CAs, humans, organizations, AI agents), $E$ is the set of certificate issuance relationships, $r$ is the root CA, and $\tau: V \to \{\mathit{NA}, \mathit{AG}\}$ classifies each node as Non-Agent ($\mathit{NA}$) or Agent ($\mathit{AG}$).

The tree satisfies a \emph{structural type constraint}: $\tau(u) = \mathit{AG} \implies \forall (u,v) \in E: \tau(v) = \mathit{AG}$. Agents can delegate to sub-agents, but cannot confer trust upon humans or legal entities.

Each agent node $v$ holds a Capability-Bound Certificate $\mathit{cert}_v$ and maintains a runtime tool configuration $\mathcal{T}_v(t)$ that may change over time. Each credential $c$ has a risk tier $\mathit{tier}_c \in \{T0, T1, T2, T3\}$ where $T0$ (human-only) is most sensitive.

\subsection{Adversary Model}

We define a PPT adversary $\mathcal{A}$ with capabilities:
\textbf{C1: Network Control} (Dolev-Yao).
\textbf{C2: Agent Compromise} of up to $k$ agent nodes.
\textbf{C3: Rogue Agent Creation} with fresh key pairs.
\textbf{C4: Silent Capability Escalation}---modifying tool configurations without changing identity credentials.
\textbf{C5: Interaction Injection}---fabricating or modifying ledger records.

$\mathcal{A}$ \textbf{cannot}: compromise non-agent nodes, forge digital signatures (EUF-CMA), find hash collisions, or reverse hash functions.

\subsection{Motivating Attack Scenarios}

We enumerate twelve attacks organized into four categories. ATK-1 through ATK-9 address structural governance threats; ATK-10 through ATK-12 address runtime behavioral attacks empirically validated by the Nemotron-AIQ dataset~\cite{ghosh_framework}, which demonstrates 100\% success rates for prompt injection and tool-mediated attacks against unprotected agentic workflows.

\textbf{Capability Attacks.}
\emph{ATK-1: Silent Capability Escalation.} Agent $v$ certified with $\{$\texttt{code\_gen}, \texttt{unit\_test}$\}$ acquires \texttt{web\_browser} via a plugin marketplace. Under every existing framework, $v$'s credentials remain valid. Our framework detects the skills manifest hash change.

\emph{ATK-2: Model Substitution.} $\mathcal{A}$ replaces the certified model with a fine-tuned variant. Detected via G2: hardware TEE attestation (deterministic), software replay verification divergence, or DV-SNARK CharMatch failure.

\emph{ATK-3: Tool Trojanization.} $\mathcal{A}$ replaces tool implementation (same name/version, different code). Detected via code-level hashing in the skills manifest.

\textbf{Delegation Attacks.}
\emph{ATK-4: Phantom Delegation.} Compromised agent (T2) issues T1 certificate to sub-agent. Prevented by monotonic trust propagation.
\emph{ATK-5: Chain Forgery.} Fabricated certificate chain. Prevented by signature verification at every level.
\emph{ATK-6: Depth Overflow.} Deep delegation hierarchy. Prevented by strictly decreasing depth counters.
\emph{ATK-7: Credential Collusion.} Two or more compromised agents at lower tiers (e.g., T3$+$T3) attempt to combine their credentials to access resources restricted to a higher tier (T1). Prevented by independent per-agent credential evaluation---each agent's tier and permissions are verified in isolation, so colluding agents gain no additional permissions beyond their individual certificates (Property~\ref{thm:isolation}).

\textbf{Forensic Attacks.}
\emph{ATK-8: Evidence Tampering.} Log modification after incident. Prevented by hash-linked ledger with bilateral signing.
\emph{ATK-9: Blame Shifting.} Dispute in multi-agent chain. Resolved via interaction commitments and reproducibility replay.

\textbf{Runtime Behavioral Attacks.}
\emph{ATK-10: Injection-Driven Knowledge Poisoning.} Adversarial content injected via tool outputs (e.g., retrieval results) propagates through the agent's reasoning chain, corrupting final outputs. Nemotron-AIQ traces show this attack succeeds in 100\% of undefended trials across knowledge-poisoning and phishing-link templates. Our ledger's content commitments ($\eta_{\mathit{in}}, \eta_{\mathit{out}}$) enable post-hoc identification of the exact span where poisoned content entered the chain.

\emph{ATK-11: Cascading Action Hijack.} Injected instructions persist across multiple agent workflow stages (INFECT template), causing the agent to refuse legitimate tasks or produce harmful content at downstream nodes. Our hash-chain ledger records every inter-span transition, enabling forensic tracing of attack propagation paths that OTEL-only audit trails cannot reliably provide.

\emph{ATK-12: Output Exfiltration via Rendering.} Agent outputs are manipulated to embed phishing links (PHISHING\_LINK) or malicious markdown (MARKDOWN) that exfiltrate data when rendered by downstream consumers. Detected post-hoc via reproducibility verification: replaying the interaction with the committed parameters produces benign output, diverging from the tampered record.

\section{Capability-Bound Agent Certificates}
\label{sec:certificates}

\subsection{Certificate Structure}

\begin{definition}[Capability-Bound Certificate]
A Capability-Bound Certificate for node $v$ is:
$$\mathit{cert}_v = (\mathit{id}_v, \mathit{id}_{\mathit{par}(v)}, \mathit{pk}_v, \mu_v, \sigma_v, \kappa_v, \rho_v, t_s, t_e, \mathit{sig})$$
where $\mu_v = (\mathit{provider}, \mathit{model\_id}, \mathit{model\_ver})$ is the \textbf{model binding}, $\sigma_v = \{(\mathit{sid}_i, \mathit{ver}_i, h_i, P_i)\}_{i=1}^{n}$ is the \textbf{skills manifest}, $\kappa_v = (\mathit{max\_tier}, \mathit{max\_depth}, \mathit{allowed\_models}, \mathit{max\_rate})$ are \textbf{trust constraints}, and $\rho_v = (\mathit{level}, \mathit{config})$ is the \textbf{reproducibility commitment}.
\end{definition}

\subsection{Skills Manifest and Capability Binding}

\begin{definition}[Skills Manifest Hash]
$$H_v^{\mathit{cert}} = \mathrm{SHA\text{-}256}(\mathrm{canonical}(\{(\mathit{sid}_i, \mathit{ver}_i, h_i, P_i)\}_{i=1}^{n}))$$
\end{definition}

At runtime: $H_v(t) = \mathrm{SHA\text{-}256}(\mathrm{canonical}(S_v(t)))$.

\textbf{Capability Binding Invariant.} Access permitted only if $H_v(t) = H_v^{\mathit{cert}}$. Any capability change invalidates access. Crucially, runtime context changes (memories, preferences) do not affect $H_v(t)$---the practical consequence of capability-context separation.

For open-source tools, $h_i$ is the SHA-256 hash of source code. For closed-source tools, $h_i$ is the hash of the public configuration descriptor (name, version, API schema).

\subsection{Trust Propagation Tree}

\begin{figure}[t]
  \centering
  \includegraphics[width=\columnwidth]{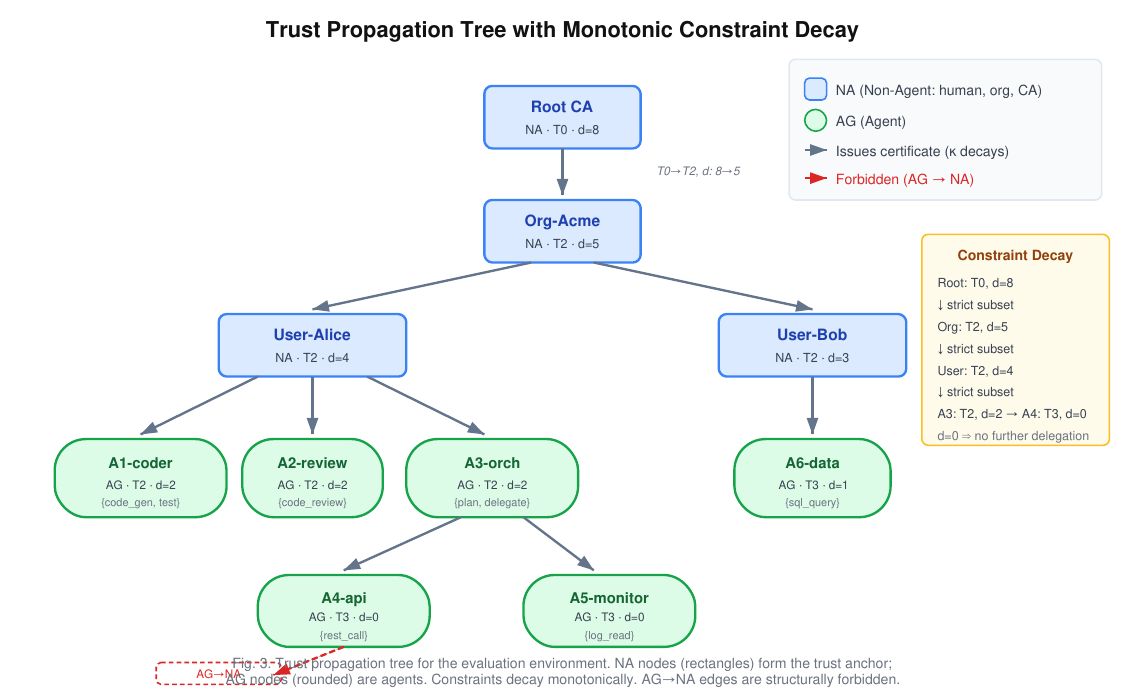}
  \caption{Trust propagation tree for the evaluation environment. NA nodes (rectangles) form the trust anchor; AG nodes (rounded) are agents. Constraints decay monotonically downward.}
  \label{fig:trust_tree}
\end{figure}

\begin{definition}[Trust Constraint Ordering]
$\kappa_a \leq_\kappa \kappa_b$ iff $\kappa_a.\mathit{max\_tier} \leq \kappa_b.\mathit{max\_tier}$, $\kappa_a.\mathit{max\_depth} < \kappa_b.\mathit{max\_depth}$ (strict), $\kappa_a.\mathit{allowed\_models} \subseteq \kappa_b.\mathit{allowed\_models}$, and $\kappa_a.\mathit{max\_rate} \leq \kappa_b.\mathit{max\_rate}$.
\end{definition}

\textbf{Trust Propagation Rule.} For every edge $(u, v) \in E$: $\kappa_v \leq_\kappa \kappa_u$.

\subsection{Certificate Verification Algorithm}

\begin{figure}[t]
  \centering
  \includegraphics[width=\columnwidth]{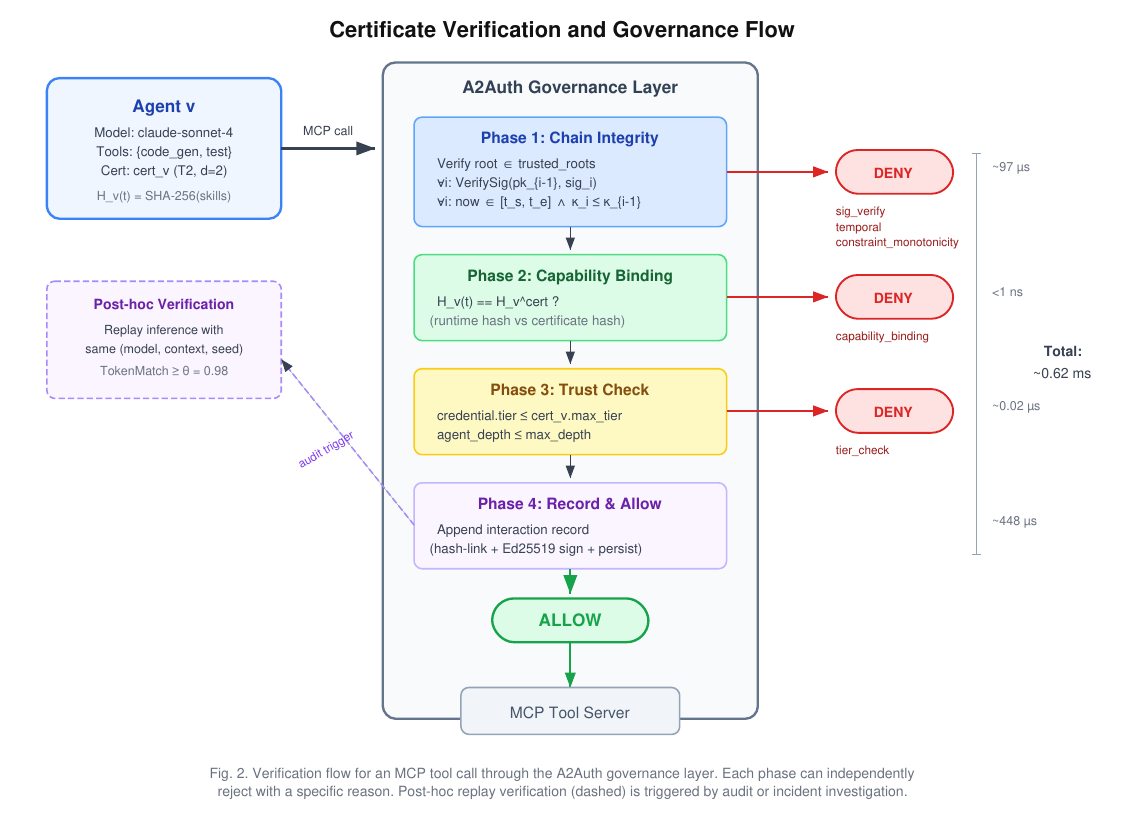}
  \caption{Verification flow for an MCP tool call through the A2Auth governance layer. Each phase can independently reject with a specific reason.}
  \label{fig:verification}
\end{figure}

When agent $v$ requests access to credential $c$ at time $t$, the verifier executes four phases (Figure~\ref{fig:verification}):

\begin{lstlisting}[language={},mathescape=true,basicstyle=\ttfamily\scriptsize]
function Verify(chain, credential, skills, roots):
  // Phase 1: Chain Integrity
  if chain[0].issuer $\notin$ roots: return DENY
  for i in 1..len(chain):
    if not VerifySig(chain[i-1].pk, chain[i].sig): return DENY
    if chain[i].$\kappa$ $\not\leq_\kappa$ chain[i-1].$\kappa$: return DENY
  // Phase 2: Capability Binding
  if SHA256(canonical(skills)) $\neq$ cert_v.$\sigma$_hash: return DENY
  // Phase 3: Trust Constraint Check
  if credential.tier > cert_v.$\kappa$.max_tier: return DENY
  // Phase 4: Revocation Check
  if IsRevoked(cert_v): return DENY
  return ALLOW
\end{lstlisting}

\subsection{Enhanced G1: Selective Disclosure via BBS+ Signatures}
\label{sec:bbs_plus}

The basic G1 instantiation (Ed25519 + SHA-256) requires \emph{full disclosure}: a verifier must see the entire skills manifest to recompute the hash and check capability binding. This suffices for single-organization deployments but is problematic in cross-organizational settings where revealing the full tool manifest leaks competitive intelligence (e.g., which proprietary tools an agent uses, its model version, or its rate limits).

\textbf{BBS+ signatures}~\cite{bbs_plus} enable a strictly stronger property: an agent can prove that its certificate is valid and that specific capability predicates hold, \emph{without revealing the full certificate contents}. BBS+ is a W3C Verifiable Credentials standard primitive with security reducing to the $q$-SDH assumption in bilinear groups~\cite{bbs_plus_security}.

\begin{definition}[BBS+ Capability Certificate]
\label{def:bbs_cert}
The issuer signs the certificate attributes as a vector of $L$ messages:
$$\mathbf{m} = (m_1, \ldots, m_L) = (\mathit{id}_v, \mathit{pk}_v, \mu_v, h_1, \ldots, h_n, \kappa_v, \rho_v, t_s, t_e)$$
producing a BBS+ signature $\Sigma = (A, e, s)$ over $\mathbf{m}$.
\end{definition}

\textbf{Selective disclosure.} To prove capability binding to a verifier, the agent derives a \emph{zero-knowledge proof of knowledge} $\pi$ that:
\begin{enumerate}[nosep]
  \item The agent possesses a valid BBS+ signature $\Sigma$ from a trusted issuer.
  \item The revealed subset $\mathbf{m}_R \subset \mathbf{m}$ (e.g., only the tool IDs, not model version or rate limits) matches the disclosed values.
  \item The hidden attributes $\mathbf{m}_H = \mathbf{m} \setminus \mathbf{m}_R$ satisfy declared predicates (e.g., $\mathit{max\_tier} \geq 2$, $\mathit{model} \in \{\texttt{gpt-4o}, \texttt{claude-4}\}$).
\end{enumerate}

The verifier learns only the revealed attributes and the predicate outcomes---not the hidden values. This enables scenarios such as: an agent proves it has \emph{some} valid tool manifest without revealing which tools; or proves its trust tier exceeds a threshold without revealing its exact tier or delegation depth.

\textbf{Capability binding under BBS+.} The capability binding invariant is preserved: at runtime, the agent recomputes $H_v(t) = \mathrm{SHA\text{-}256}(\mathrm{canonical}(S_v(t)))$ and includes a ZKP that $H_v(t)$ matches the committed hash in the signed message vector. Any tool change invalidates the proof.

\textbf{Cost-privacy tradeoff.} BBS+ selective disclosure proof generation costs ${\sim}14.7$\,ms and verification ${\sim}13.8$\,ms (benchmarked in \S\ref{sec:eval}), compared to ${\sim}97$\,\textmu s for basic Ed25519 verification---a ${\sim}142\times$ overhead. This is the concrete cost of privacy: the governance contract (G1) is identical; the privacy properties differ. Deployments choose between basic and enhanced G1 based on whether cross-organizational privacy justifies the latency cost. Crucially, these costs apply per-request only when selective disclosure is needed; for routine intra-organizational use, the basic instantiation suffices.

\textbf{Security.} BBS+ signature unforgeability reduces to the $q$-SDH assumption in Type-3 bilinear groups; the zero-knowledge property reduces to DDH. Both are standard assumptions with well-studied cryptanalytic history. The enhanced instantiation inherits all nine security properties of the basic instantiation, with the additional property of \emph{certificate privacy}: no PPT verifier can extract hidden attributes from a selective disclosure proof.

\subsection{Integration with A2A Agent Cards}

Agent Certificates are designed as non-invasive extensions of the A2A ecosystem via an optional \texttt{extensions} field in the Agent Card containing the certificate PEM, chain, and skills manifest hash. The enhanced instantiation adds a \texttt{disclosure\_mode} field (\texttt{full} or \texttt{selective}) and, for selective disclosure, the BBS+ proof $\pi$ replacing the raw certificate.

\section{Inference Authenticity and Reproducibility}
\label{sec:reproducibility}

\subsection{G2 as Inference Authenticity}

G2 requires \emph{inference authenticity}: verifying that an agent actually executed its declared model on the recorded inputs. This is an abstract governance requirement, independent of how it is achieved. Three implementation paths exist, each with distinct trust assumptions:

\textbf{Hardware attestation (TEE).} Trusted execution environments~\cite{sgx_explained,arm_trustzone} guarantee that the declared code was what actually executed. This provides the strongest G2 guarantee---deterministic, not probabilistic---but requires specialized hardware infrastructure (Intel SGX/TDX~\cite{sgx_explained}, ARM TrustZone~\cite{arm_trustzone}, NVIDIA H100 Confidential Computing~\cite{nvidia_cc}) that is unavailable in most current LLM API deployments. TEE satisfies G2 but \emph{does not satisfy G1 or G3}: a TEE-attested agent can still silently acquire tools (G1 violation), and TEE provides per-execution integrity without tamper-evident cross-agent audit trails (G3 violation). Our framework provides the governance layer that TEE alone cannot.

\textbf{Software replay verification.} When TEE is unavailable---the dominant case for cloud-hosted LLM APIs---G2 can be approximated through \emph{reproducibility-based replay}: re-execute the agent's declared model on recorded inputs and compare outputs. This provides a probabilistic software approximation of TEE, with guarantees formalized in Property~\ref{thm:reproducibility_informal} and Theorem~\ref{thm:bounded_divergence}. The approximation quality improves with the verification budget $n$.

\textbf{Zero-knowledge verification (DV-SNARK, \S\ref{sec:dv_snark}).} A privacy-preserving variant of replay verification: the prover demonstrates output consistency without revealing prompts or outputs. This adds privacy guarantees to the software path at the cost of computational overhead.

\textbf{Full-inference zkVM~\cite{baid}.} A SNARK circuit over the complete transformer forward pass provides the strongest software G2 guarantee---computational integrity without TEE hardware---by cryptographically binding the agent's execution to its registered code commitment $C_P = \mathrm{CommitProg}(P)$. This path is appropriate for high-assurance scenarios (e.g., financial settlement, regulated environments) where the proof generation overhead (${\sim}120$\,s per interaction) is acceptable and provider re-execution cooperation is unavailable. Unlike the software replay path, this path does not require model re-execution by the verifier; the proof is self-contained.

All four paths satisfy the same G2 governance contract. The framework is agnostic to which path is used; the choice is declared in the agent's certificate ($\rho_v$) and recorded in the ledger, making the G2 implementation transparent to auditors and downstream agents.

\subsection{Reproducibility as a Necessary Condition for Software G2}

For the software path, a key observation applies: \textbf{behavioral verifiability is impossible without reproducibility}. If an agent's outputs cannot be independently regenerated---even approximately---then no post-hoc mechanism can distinguish authorized behavior from unauthorized behavior. This is a fundamental constraint on \emph{any} software-based verification scheme.

This observation has a prescriptive implication for secure agent system design: \textbf{high-assurance deployments without TEE must be engineered for reproducible output.} Concretely, this means using deterministic sampling (temperature$=$0, fixed seeds), preferring structured outputs (code, JSON, SQL) over free-form prose where security-critical decisions are involved, and selecting inference providers that guarantee bitwise or near-bitwise reproducibility~\cite{eigenai}. Agents that cannot commit to any reproducibility level receive a trust tier downgrade (\S\ref{sec:repro_tier}), making the security cost of non-reproducibility explicit in the governance framework. More broadly, by making reproducibility a prerequisite for software G2 compliance, our framework transforms it from an implementation detail into a \emph{security-critical model performance metric}---with implications for how models are specified, evaluated, and deployed (\S\ref{sec:chain_verifiability}).

\subsection{Empirical Foundation}

LLM inference is \emph{nearly deterministic}. Karvonen et al.~\cite{difr} demonstrated that with fixed sampling parameters, regenerating output produces $>$98\% token-identical sequences. EigenAI~\cite{eigenai} achieved 100\% bit-exact deterministic inference on fixed GPU architectures with negligible performance loss. Our own validation (\S\ref{sec:eval}, Experiment~5) confirms that code-generating agents achieve \emph{perfect} determinism, while prose outputs exhibit paraphrase-level variation that remains well-separated from cross-model divergence.

\subsection{Reproducibility as a Trust Primitive}

\begin{definition}[Reproducibility Commitment]
$\rho_v = (\mathit{level}, \mathit{config})$ where $\mathit{level} \in \{\mathit{full}, \mathit{statistical}, \mathit{none}\}$.
\end{definition}

\textbf{Full} ($\mathit{level} = \mathit{full}$): $\forall (\mathit{context}, \mathit{seed}): O_v(\mathit{context}, \mathit{seed}) = O_v^{\mathit{replay}}(\mathit{context}, \mathit{seed})$.

\textbf{Statistical} ($\mathit{level} = \mathit{statistical}$):
$$\forall (\mathit{context}, \mathit{seed}): \mathrm{CharMatch}(O_v, O_v^{\mathit{replay}}) \geq \theta$$
The threshold $\theta$ is task-type dependent: our empirical validation (Experiment~5, \S\ref{sec:eval}) shows that code-generating agents achieve perfect determinism ($\theta = 0.98$), while prose-generating agents require $\theta = 0.10$ to account for paraphrase-level variation. Cross-provider mean CharMatch ($0.114$, Table~\ref{tab:exp5b}) remains well below even the relaxed threshold.

\textbf{None}: No reproducibility commitment.

\subsection{Reproducibility and Trust Tier Interaction}
\label{sec:repro_tier}

Reproducibility commitments interact with credential governance through a \emph{trust bonus}: agents with no reproducibility commitment are downgraded one tier (T1$\to$T2, T2$\to$T3). This makes the governance cost of non-reproducibility explicit: an agent that cannot be verified post-hoc is inherently less trustworthy, regardless of its other credentials.

\subsection{Replay-Based Verification Protocol}

\begin{lstlisting}[language={},mathescape=true,basicstyle=\ttfamily\scriptsize]
function ReplayVerify(cert, record, O_orig):
  $\rho$ := cert.$\rho$
  if $\rho$.level = none: return INCONCLUSIVE
  O_ref := ExecuteModel(cert.$\mu$, cert.$\sigma$,
           record.input, record.seed, $\rho$.config)
  if $\rho$.level = full:
    return O_ref == O_orig ? VERIFIED : VIOLATION
  else:  // statistical
    return CharMatch(O_ref, O_orig) $\geq$ $\theta$ ?
           VERIFIED : VIOLATION
\end{lstlisting}

\subsection{Security Analysis}

\begin{property}[Reproducibility Soundness, informal]
\label{thm:reproducibility_informal}
If compromised agent $v$ executes model $M' \neq M$ or skills $S' \neq S$, replay verification detects the deviation with probability $\geq 1 - \epsilon(\lambda)$ for negligible $\epsilon$.
\end{property}

By the DiFR token divergence result~\cite{difr}, the match rate between different models is $\delta \approx 0.06$ (character-level, validated in Experiment~5). Our empirical data (Table~\ref{tab:exp5c}) shows that the fraction of cross-model output pairs exceeding any reasonable threshold $\theta$ is at most $q \approx 0.02$ --- i.e., a substituted model's output coincidentally passes the CharMatch $\geq \theta$ check with probability at most $q$ per prompt. Over $n$ independent verification prompts, the probability of all checks passing despite model substitution is at most $q^n$, which is negligible for realistic $n$.

\textbf{Input integrity for replay verification.} Replay verification requires the recorded input to be authentic. A compromised agent cannot fabricate its own input because the interaction ledger uses \emph{bilateral signing}: the sender (an honest agent or non-agent) signs the input commitment $\eta_{\mathit{in}}$, which the compromised receiver cannot modify without invalidating the sender's signature. G2 therefore relies on \emph{sender-authenticated} input, not self-reported input---a unidirectional dependency on G3, not a circular one. The formal argument is given in Property~\ref{prop:g2_input_integrity} (\S\ref{sec:proofs}).

\subsection{Indistinguishability Under Sufficient Verification}
\label{sec:indistinguishability}

A natural concern is adversarial fine-tuning: could an adversary produce a model $M'$ that passes reproducibility checks on tested inputs but behaves maliciously on untested ones? We show that sufficient replay verification bounds the impact of any such attack.

\begin{definition}[$(n, \epsilon)$-Indistinguishability]
\label{def:indist}
Models $M$ and $M'$ are $(n, \epsilon)$-indistinguishable under skill set $\mathcal{S}$ and prompt distribution $\mathcal{D}$ if, after $M'$ passes replay verification on $n$ independent prompts $x_1, \ldots, x_n \sim \mathcal{D}$ across skills in $\mathcal{S}$:
$$\Pr_{x \sim \mathcal{D}}[\mathrm{CharMatch}(M(x), M'(x)) < \theta] \leq \epsilon$$
\end{definition}

\begin{theorem}[Bounded Divergence]
\label{thm:bounded_divergence}
If adversary-controlled model $M'$ passes replay verification (CharMatch $\geq \theta$) on $n$ independent prompts drawn from $\mathcal{D}$ across $|\mathcal{S}|$ skills, then $M$ and $M'$ are $(n, \epsilon)$-indistinguishable with:
$$\epsilon \leq 1 - \alpha^{1/n}$$
where $\alpha$ is the significance level. For small $\epsilon$, this simplifies to $\epsilon \lessapprox \ln(1/\alpha)/n$. Equivalently, for target failure-rate bound $\epsilon$, the required verification budget is $n \geq \ln(1/\alpha)/\epsilon$.
\end{theorem}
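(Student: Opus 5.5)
The plan is to read the statement as a one-sided frequentist confidence bound on a Bernoulli failure rate, and to prove it by a hypothesis-test (Clopper--Pearson style) argument. Let $p := \Pr_{x\sim\mathcal{D}}[\mathrm{CharMatch}(M(x),M'(x)) < \theta]$ denote the true per-prompt failure probability of the substituted model $M'$. Then $p$ is exactly the quantity that Definition~\ref{def:indist} requires to be at most $\epsilon$. Each replay check on a fresh prompt $x_i \sim \mathcal{D}$ is a Bernoulli trial that fails with probability $p$. The verifier's prompts are drawn independently, and $M'$ is fixed before the prompts are sampled, so the $n$ outcomes are i.i.d. This modelling assumption must be stated explicitly: the adversary commits to $M'$ without seeing $x_1,\dots,x_n$. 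It is also where the skill set $\mathcal{S}$ enters, namely only through $\mathcal{D}$ being a distribution over prompts spanning $\mathcal{S}$.

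The core step is a single computation. If $M'$ passes all $n$ checks, the probability of that event under failure rate $p$ is $(1-p)^n$. Consider any hypothesised rate $p > \epsilon^\star := 1-\alpha^{1/n}$. For it, $(1-p)^n < (\alpha^{1/n})^n = \alpha$, so observing $n$ passes would be an event of probability below the significance level. Hence every $p > \epsilon^\star$ is rejected at level $\alpha$, and the set of failure rates consistent with the observation is $[0,\epsilon^\star]$. Equivalently, $\epsilon^\star$ is the exact upper $(1-\alpha)$ Clopper--Pearson bound for zero failures in $n$ trials. Concluding $p \le \epsilon^\star$ therefore errs with probability at most $\alpha$ over the choice of prompts. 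This gives $(n,\epsilon)$-indistinguishability with $\epsilon \le 1-\alpha^{1/n}$, at confidence $1-\alpha$.

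The approximations follow by routine calculus. Writing $\alpha^{1/n} = e^{-\ln(1/\alpha)/n}$ and using $1-e^{-z} \le z$ yields $\epsilon \le \ln(1/\alpha)/n$, with near-equality for small $z$. Rearranging gives the required budget $n \ge \ln(1/\alpha)/\epsilon$. The sharper statement is the inequality $-n\ln(1-\epsilon) \ge n\epsilon$, which makes $(1-\epsilon)^n \le e^{-n\epsilon} \le \alpha$ hold once $n \ge \ln(1/\alpha)/\epsilon$.

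The main obstacle is interpretive rather than computational. As literally written, ``$\Pr[\cdot]\le\epsilon$'' is a deterministic claim about $p$, but it can only hold with confidence $1-\alpha$. I will therefore make explicit that the theorem is a confidence statement: $\Pr_{x_1,\dots,x_n}\bigl[\text{all pass} \wedge p > \epsilon^\star\bigr] \le \alpha$. This is proved by noting that if $p > \epsilon^\star$ the pass event itself has probability at most $(1-p)^n < \alpha$, and if $p \le \epsilon^\star$ the bad event is empty. I will also remark that the guarantee is distributional. An adversarial $M'$ may still misbehave on inputs of negligible $\mathcal{D}$-mass, or on prompts it can predict, so the bound is meaningful only when verifier prompts are secret and unpredictable. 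This is the sense in which the result makes software G2 a probabilistic approximation of TEE.
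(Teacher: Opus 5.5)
Your proposal is correct and follows the paper's own argument: it treats each replay check as an independent Bernoulli trial with failure rate $p$, then takes the exact zero-failure one-sided Clopper--Pearson bound from $(1-p)^n=\alpha$, giving $p\le 1-\alpha^{1/n}$. Your additions tighten rather than change the paper's proof: you state the result as the joint event $\Pr[\text{all pass}\wedge p>1-\alpha^{1/n}]\le\alpha$, you make explicit that $M'$ is fixed before the prompts are drawn, and you use $1-e^{-z}\le z$ to turn the paper's approximation $\ln(1-p)\approx -p$ into a genuine inequality.
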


\textbf{Proof sketch.} Let $p = \Pr_{x \sim \mathcal{D}}[\mathrm{CharMatch}(M(x), M'(x)) < \theta]$ be the true failure rate under $M'$. Each verification trial $Z_i = \mathbf{1}[\mathrm{CharMatch} \geq \theta]$ is an independent Bernoulli$(1-p)$. Observing all $n$ tests pass ($\sum Z_i = n$), the likelihood is $(1-p)^n$. The exact one-sided Clopper-Pearson upper bound on $p$ at confidence $1-\alpha$ is $p \leq 1 - \alpha^{1/n}$. For practical computation: $(1-p)^n \leq \alpha$ implies $p \leq 1 - \alpha^{1/n} \approx \ln(1/\alpha)/n$ when $p$ is small. This is a distribution-free result: the bound depends only on the number of verification trials and the desired confidence level, not on the threshold $\theta$ itself. The role of $\theta$ is to define what constitutes a ``pass'' --- a higher $\theta$ increases sensitivity (more failures for a substituted model) but also increases the false-positive rate for the legitimate model. Full proof in \S\ref{sec:proofs}. \qed

\textbf{Practical implications.} Table~\ref{tab:verification_budget} shows the failure-rate bound $\epsilon$ (i.e., the maximum fraction of deployment prompts on which $M'$ can diverge undetected) as a function of the verification budget $n$ at 99\% confidence ($\alpha = 0.01$):

\begin{table}[h]
\centering\small
\begin{tabular}{rrl}
\toprule
$n$ & $\epsilon$ bound & Interpretation \\
\midrule
10 & 0.369 & Coarse screening \\
25 & 0.168 & Moderate assurance \\
50 & 0.089 & $<$10\% divergent prompts \\
100 & 0.045 & $<$5\% divergent prompts \\
200 & 0.023 & High assurance \\
500 & 0.009 & $<$1\% divergent prompts \\
\bottomrule
\end{tabular}
\caption{Failure-rate bound $\epsilon$ at 99\% confidence ($\alpha=0.01$) for various verification budgets $n$, computed as $\epsilon = 1 - 0.01^{1/n}$.}
\label{tab:verification_budget}
\end{table}

A verification suite of 50 prompts per skill --- operationally feasible as a periodic audit --- bounds the adversary's undetected divergence to at most 8.9\% of deployment prompts with 99\% confidence. Increasing to 200 prompts tightens this to 2.3\%.

\textbf{Threshold calibration.} The threshold $\theta$ must be calibrated per task type and model to balance detection sensitivity against the false-positive rate. Our empirical data (Experiment~5) shows that single-metric thresholds suffer from the bimodal output distribution: for code generation, CharMatch is either $>0.95$ (identical outputs) or $<0.5$ (paraphrase variants), leaving no single $\theta$ that simultaneously achieves low false positives and high detection. We therefore adopt a \textbf{multi-metric ensemble}: a prompt is flagged if \emph{any} of four metrics (CharMatch, Jaccard word overlap, TF-IDF cosine, character n-gram cosine) falls below its respective per-metric optimal threshold (determined by Youden's $J$ statistic, Table~\ref{tab:exp5e} in \S\ref{sec:eval}). This ensemble approach improves detection robustness across task types, as each metric captures different aspects of output similarity (surface-form identity vs.\ semantic overlap).

\textbf{Security interpretation.} This transforms reproducibility from a \emph{detection mechanism} into a \emph{safety certificate}: an adversary who fine-tunes $M$ into $M'$ that passes $n$ diverse replay checks has produced a model that diverges on at most an $\epsilon$-fraction of deployment prompts, with $\epsilon$ decreasing as $O(\ln(1/\alpha)/n)$. The adversary faces a dilemma: either $M'$ behaves like $M$ on most inputs (limiting the attack's impact to at most an $\epsilon$-fraction of requests), or $M'$ diverges more broadly and will be caught with probability increasing in $n$.

\subsection{Chain Verifiability}
\label{sec:chain_verifiability}

The preceding analysis establishes that G2 can verify an individual agent's computational process. A natural question arises: \emph{when agents form a multi-hop execution chain, is per-node verification sufficient for end-to-end behavioral verifiability?} We show that it is not---behavioral verifiability is a \textbf{chain property}, not a node property---and that this observation has direct implications for trust tree architecture (Figure~\ref{fig:chain_verifiability}).

\begin{figure}[t]
  \centering
  \includegraphics[width=\columnwidth]{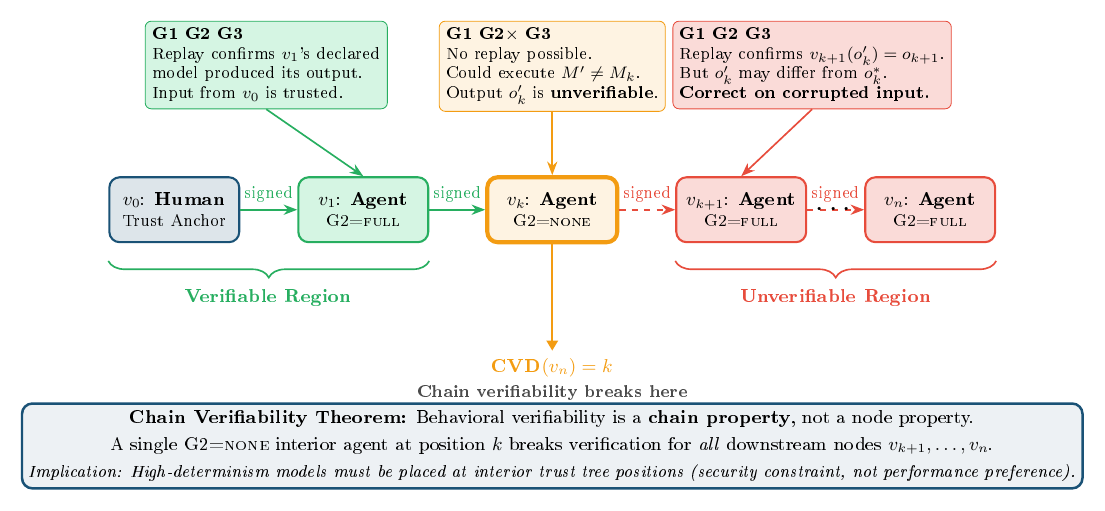}
  \caption{Chain Verifiability: a single G2$=$\textsc{none} interior agent at position $k$ breaks behavioral verification for all downstream nodes, even if they individually satisfy G2$=$\textsc{full}. The verifiable region extends only to CVD$(v_n) = k$.}
  \label{fig:chain_verifiability}
\end{figure}

\textbf{The gap that G2 uniquely fills.} G1+G3 together provide \emph{attribution integrity}---we know who sent what to whom, and that no message was altered---but signing proves attribution, not computational fidelity. A compromised agent holding a valid certificate (G1$\checkmark$) can execute an arbitrary model internally and sign the output (G3$\checkmark$); every cryptographic check passes. Only G2 detects this attack by verifying that the declared computational process produced the observed output, providing a probabilistic approximation of trusted execution ($\epsilon \leq 1 - \alpha^{1/n}$, Theorem~\ref{thm:bounded_divergence}) achievable purely in software.

\textbf{Why chain position matters.} Now consider a trust tree execution chain $v_0 \to v_1 \to \cdots \to v_n$, where $v_0$ is a human principal (trust anchor) and each $v_i$ ($i \geq 1$) is an agent node. Agent $v_j$ receives input from $v_{j-1}$, processes it using its declared model, and passes output to $v_{j+1}$.

Suppose $v_j$ is a compromised agent with G2$=$\textsc{none} (either because it uses a low-determinism model or because G2 verification is not enforced). Then:

\begin{enumerate}[nosep]
  \item $v_j$'s output cannot be verified as originating from its declared model.
  \item $v_j$'s output becomes $v_{j+1}$'s input. Even if $v_{j+1}$ satisfies G2$=$\textsc{full}, replay verification of $v_{j+1}$ only proves that $v_{j+1}$ faithfully processed the input it received---but that input itself is unverifiable.
  \item By induction, \emph{all} downstream nodes $v_{j+1}, \ldots, v_n$ operate on data whose provenance cannot be cryptographically verified back to the trust anchor.
\end{enumerate}

\begin{definition}[Chain Verifiability Depth]
\label{def:chain_depth}
For execution chain $v_0 \to v_1 \to \cdots \to v_n$ with $v_0$ a non-agent trust anchor, define:
$$\mathrm{CVD}(v_n) = \min\{j \geq 1 : v_j \text{ is an agent with } \rho_{v_j}.\mathit{level} = \mathit{none}\}$$
with $\mathrm{CVD}(v_n) = n$ if no such $v_j$ exists (fully verifiable chain).
\end{definition}

\begin{theorem}[Chain Verifiability]
\label{thm:chain_verifiability}
Let $v_0 \to v_1 \to \cdots \to v_n$ be an execution chain where $v_0$ is a non-agent trust anchor. If $\mathrm{CVD}(v_n) = k < n$, then for all $i > k$, the output of $v_i$ cannot be verified as originating from a legitimate computation chain rooted at $v_0$, even if $v_i$ individually satisfies $\rho_{v_i}.\mathit{level} = \mathit{full}$.
\end{theorem}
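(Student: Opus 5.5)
The argument is by contradiction, using an indistinguishability construction. To make ``verified as originating from a legitimate computation chain rooted at $v_0$'' precise, we call the output $o_i$ of $v_i$ \emph{chain-verifiable} if some verifier $\mathsf{V}$ has the following property. Given the ledger records for hops $1,\ldots,i$ (with signatures, $\eta_{\mathit{in}}$, $\eta_{\mathit{out}}$), the certificates, and replay access as in ReplayVerify, $\mathsf{V}$ accepts with non-negligible advantage only when each $o_j$ ($j\le i$) equals, or per $\rho_{v_j}$ is $\theta$-close to, $M_{v_j}(o_{j-1})$ with $o_0$ the anchor's signed input. The plan is to exhibit two executions that agree on everything $\mathsf{V}$ can see, yet exactly one of them is legitimate. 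Then no $\mathsf{V}$ can decide chain-verifiability for $i>k$.

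\textbf{Step 1 (the break at $v_k$).} By the definition of $\mathrm{CVD}$, $v_k$ is an agent with $\rho_{v_k}.\mathit{level}=\mathit{none}$, so ReplayVerify returns \textsc{inconclusive} on it. Let $\mathcal{A}$ compromise $v_k$ (capability C2, with $k$-budget $\ge 1$). In the \emph{honest} world, $v_k$ outputs $o_k=M_{v_k}(o_{k-1})$. In the \emph{adversarial} world, $v_k$ outputs an arbitrary $o_k'$ in the support of $M_{v_k}$'s output distribution, for example another sample under a different seed. This is possible because level $\mathit{none}$ makes no commitment binding outputs to inputs. Both worlds produce validly signed ledger records. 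G1 passes because the certificate and skills hash are unchanged, and G3 passes because $v_k$ signs its own $\eta_{\mathit{out}}$ (Property~\ref{thm:ledger} guarantees attribution, not fidelity). I would cite the earlier discussion that ``signing proves attribution, not computational fidelity'' here.

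\textbf{Step 2 (induction downstream).} Run honest $v_{k+1},\ldots,v_n$ with $\rho=\mathit{full}$ on $o_k'$. By induction on $i>k$, every replay check on $v_i$ passes in the adversarial world: ReplayVerify recomputes $M_{v_i}(o_{i-1}')$, whose input comes from the sender-authenticated $\eta_{\mathit{in}}$ (Property~\ref{prop:g2_input_integrity}), and this equals the recorded output. The checks pass in the honest world as well. So for every $i>k$, the verifier's view in the two worlds differs only in the value $o_k$ versus $o_k'$, and both values are consistent with all checks available to $\mathsf{V}$.

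\textbf{Step 3 (conclusion).} Any $\mathsf{V}$ that accepts the honest transcript must also accept the adversarial transcript, which is not legitimate once we pick $o_k'$ far from $M_{v_k}(o_{k-1})$. Hence $\mathsf{V}$ cannot certify $o_i$ for any $i>k$, regardless of $\rho_{v_i}$. The main obstacle is Step~1: making ``unverifiable'' rigorous for a level-$\mathit{none}$ node. We must argue that no other available mechanism (G1, G3, the downstream replays) constrains $o_k$ relative to $o_{k-1}$. I would handle this by formalizing the verifier's view as exactly the ledger-plus-replay interface and noting that level $\mathit{none}$ removes $v_k$ from the replay oracle. Alternatively, $v_k$ may simply be honest but nondeterministic, in which case even honest $o_k$ values are not distinguishable from substituted ones.
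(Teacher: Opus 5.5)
Your proposal is essentially the paper's proof. The paper likewise lets the adversary replace the model at the level-$\mathit{none}$ node $v_k$, notes that the resulting $o_k'$ still passes G1 and G3, and inducts over $i>k$ to show that each downstream full-level replay only certifies faithful processing of the possibly corrupted input it received.

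Your two-world framing is a more formal version of the paper's bare assertion that the verifier ``cannot distinguish'' $o_k'$ from $o_k^*$, but note that the two transcripts differ in every downstream value, not only in $o_k$. So Step~3 goes through only with identically distributed views and $M_{v_k}$ unavailable to the verifier, as you suggest, or, as the paper tacitly assumes, with a verifier restricted to the framework's own checks.
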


\textbf{Proof sketch.} By Definition~\ref{def:chain_depth}, $v_k$ has $\rho.\mathit{level} = \mathit{none}$, so no replay verification can confirm that $v_k$'s declared model produced its output $o_k$. Under the adversary model (\S\ref{sec:threat}), $v_k$ may have executed an arbitrary model $M' \neq M_k$ that produces output $o_k' \neq o_k^*$ (where $o_k^*$ is the output the declared model would have produced). Agent $v_{k+1}$ receives $o_k'$ as input. Even with $\rho_{v_{k+1}}.\mathit{level} = \mathit{full}$, replay verification confirms $v_{k+1}(o_k') = o_{k+1}$, i.e., that $v_{k+1}$ faithfully processed $o_k'$. But $o_k'$ may differ from $o_k^*$, so $o_{k+1}$ is the correct computation on \emph{incorrect} input. By induction on chain length, the same applies to all $v_i$ with $i > k$. Full proof in \S\ref{sec:proofs}. \qed

\textbf{Per-model reproducibility variance.} Our multi-provider experiment (Experiment~5c, \S\ref{sec:eval}) reveals that model determinism varies by over an order of magnitude: from CharMatch$=$0.987 (Haiku~4.5) to 0.169 (Kimi~K2.5). Combined with the Chain Verifiability Theorem, this transforms into \textbf{prescriptive architectural guidance}: the security of a multi-agent pipeline depends not just on whether G2 is enforced, but on \emph{where} high- and low-determinism models are placed in the trust tree.

\textbf{Reproducibility as a model performance metric.} The G1--G3 framework elevates reproducibility from an implementation detail to a \textbf{first-class model performance metric}: a model's determinism directly determines which governance level it qualifies for (\S\ref{sec:trilemma}), which trust-tree positions it may occupy, and what verification budget is needed (Theorem~\ref{thm:bounded_divergence}). We argue that providers should \textbf{self-declare reproducibility characteristics}---analogous to FIPS~140 security levels for cryptographic modules---including expected CharMatch ranges under temperature$=$0 and sensitivity to infrastructure changes. Our empirical characterization across 9~models and 7~providers (Experiment~5) provides the first multi-provider basis for such specifications.

\textbf{Architectural implications.} This theorem yields three deployment principles:

\emph{Principle 1: High-determinism models at interior nodes.} Models with strong reproducibility (Haiku: CharMatch$=$0.987, Sonnet: 0.823) occupy interior trust-tree positions; low-determinism models (Kimi: 0.169, DeepSeek: 0.225) are constrained to leaf positions---a \textbf{security architecture constraint}, not merely a performance recommendation.

\emph{Principle 2: Verification boundary checkpoints.} When a G2$=$\textsc{none} agent must appear at an interior position, the chain requires a verification boundary---human-in-the-loop, TEE attestation, or dual-execution---to re-establish chain verifiability for downstream nodes.

\emph{Principle 3: Explicit degradation in the ledger.} When chain verifiability breaks at node $v_k$, the ledger records a \texttt{PARTIAL\_VERIFIABILITY} marker; downstream agents inherit a trust tier penalty, making the reduced guarantee visible to auditors.

\textbf{Relationship to TEE.} The Chain Verifiability Theorem clarifies \emph{where} G2 is needed: at every interior node. In mixed deployments, different nodes may use different G2 implementations (TEE, software replay, or DV-SNARK), with chain verifiability depth determined by the weakest interior node. Regardless of the G2 implementation, each node still requires G1 and G3: TEE attests that \emph{some} model ran correctly, but only G1 ensures it was the authorized model, and only G3 provides tamper-evident records.

\subsection{Enhanced G2: Zero-Knowledge Behavioral Verification}
\label{sec:dv_snark}

The basic G2 instantiation requires \emph{provider cooperation}: the verifier must re-execute the agent's declared model on recorded inputs and compare outputs. This creates two limitations: (1)~the verifier must have API access to the same model version, and (2)~the original prompts and outputs are disclosed to the verifier, which may contain sensitive information.

\textbf{Designated-Verifier SNARKs} (DV-SNARKs) address both limitations by enabling the agent (or a delegated prover) to demonstrate output consistency \emph{in zero knowledge}: the verifier learns only that $\mathrm{CharMatch}(O, O^{\mathrm{replay}}) \geq \theta$, without seeing $O$, $O^{\mathrm{replay}}$, or the prompts.

\textbf{Key design insight.} Unlike BAID~\cite{baid}, which SNARKs over the entire transformer forward pass (${\sim}120$\,s per proof for a 7B model), our approach SNARKs only over the \emph{verification metric computation}---a fundamentally simpler circuit:

\begin{definition}[CharMatch Verification Circuit]
\label{def:charmatch_circuit}
$$\mathcal{C}_{\mathrm{verify}}(O, O^{\mathrm{replay}}, \theta) = \begin{cases} 1 & \text{if } \frac{|\{i : O[i] = O^{\mathrm{replay}}[i]\}|}{\max(|O|, |O^{\mathrm{replay}}|)} \geq \theta \\ 0 & \text{otherwise} \end{cases}$$
\end{definition}

For typical agent outputs ($|O| \leq 4{,}096$ characters), the circuit requires $O(|O|)$ equality comparisons and a single division---on the order of $10^4$ R1CS constraints (specifically: 19{,}238 for 1K outputs, ${\sim}73{,}000$ for 4K outputs), compared to ${\sim}10^{10}$ for a transformer circuit. This reduces proof generation from minutes to sub-seconds.

\textbf{Designated-Verifier property.} The proof is bound to a specific verifier's public key, preventing proof transfer. Only the designated auditor can verify the proof, which prevents a compromised agent from shopping for a lenient verifier. The DV property also prevents the proof from being used as transferable evidence---a deliberate design choice that limits the proof's scope to the specific audit context.

\textbf{Protocol.} Enhanced G2 verification proceeds as:
\begin{enumerate}[nosep]
  \item The agent (or a trusted replay service) executes the model on the recorded input to produce $O^{\mathrm{replay}}$.
  \item The prover computes $\pi = \mathrm{DV\text{-}SNARK}.\mathrm{Prove}(\mathcal{C}_{\mathrm{verify}}, (O, O^{\mathrm{replay}}), \theta, \mathit{pk}_{\mathrm{verifier}})$.
  \item The verifier checks $\mathrm{DV\text{-}SNARK}.\mathrm{Verify}(\pi, \theta, \mathit{sk}_{\mathrm{verifier}}) \stackrel{?}{=} 1$.
\end{enumerate}

\textbf{What DV-SNARK does and does not prove.} The SNARK proves that the prover possesses two strings whose CharMatch exceeds $\theta$; it does \emph{not} prove that $O^{\mathrm{replay}}$ was produced by the declared model (that trust is delegated to the replay service). The SNARK's contribution is \emph{privacy}; for full computational integrity, TEE or BAID-style full-inference SNARKs remain necessary.

\textbf{Cost-privacy tradeoff.} DV-SNARK proof generation costs ${\sim}174$\,ms (1K outputs) to ${\sim}750$\,ms (4K outputs); Groth16 verification is constant at ${\sim}1.3$\,ms (\S\ref{sec:eval}). Proof generation is ${\sim}1{,}740\times$ slower than basic CharMatch but is offline and asynchronous; it is ${\sim}690\times$ faster than BAID's full-inference zkVM (${\sim}120$\,s). The governance contract (G2) is identical; the privacy properties differ.

\textbf{Security.} Knowledge soundness ensures that a prover cannot produce a valid proof without possessing strings satisfying the CharMatch predicate. The DV property ensures simulation-privacy: no entity other than the designated verifier can distinguish a real proof from a simulated one.

\section{Verifiable Interaction Ledger}
\label{sec:ledger}

\subsection{Motivation}

Capability binding and reproducibility verify individual agents. Real-world incidents involve \emph{chains} of interactions. We need an immutable, tamper-evident record where no single party---including the ledger operator---can undetectably alter history.

\textbf{Why OpenTelemetry traces are insufficient.} The industry standard for agentic observability is OpenTelemetry (OTEL)~\cite{otel_spec}, whose span trees record execution traces with \texttt{span\_id}/\texttt{parent\_id} linkage, timestamped \texttt{input.value}/\texttt{output.value} payloads, and span-kind annotations (\texttt{CHAIN}, \texttt{LLM}, \texttt{TOOL}). Analysis of the Nemotron-AIQ Agentic Safety Dataset~\cite{ghosh_framework}---10,796 OTEL traces from NVIDIA's AI-Q research assistant under 22 distinct attack templates---reveals three critical gaps: (1)~\emph{No integrity protection}: span linkage is purely referential; any party with storage access can modify, delete, or reorder spans undetectably. (2)~\emph{No identity binding}: spans record function names but not cryptographically verified agent identities or capability certificates. (3)~\emph{No privacy preservation}: OTEL stores full plaintext input/output payloads, creating data exposure risk.
Our interaction ledger addresses all three gaps through hash-chain linking ($h_{\mathit{prev}}$), bilateral digital signing ($\mathit{sig}_i$; Ed25519 in the basic instantiation), agent certificate binding ($\mathit{ch}_{\mathit{send}}, \mathit{ch}_{\mathit{recv}}$), and commitment-only storage ($\eta_{\mathit{in}}, \eta_{\mathit{out}}$).

\subsection{Ledger Structure}

\begin{definition}[Interaction Record]
$R_i = (\mathit{seq}_i, t_i, a_{\mathit{send}}, a_{\mathit{recv}}, \mathit{ch}_{\mathit{send}}, \mathit{ch}_{\mathit{recv}}, \eta_{\mathit{in}}, \eta_{\mathit{out}}, \pi_i, h_{\mathit{prev}}, \mathit{sig}_i)$
where $\eta_{\mathit{in}} = \mathrm{SHA\text{-}256}(\mathit{input})$, $\eta_{\mathit{out}} = \mathrm{SHA\text{-}256}(\mathit{output})$, $\pi_i = (\mathit{seed}, \mathit{model\_ver}, \mathit{skills\_hash})$, and $h_{\mathit{prev}} = \mathrm{SHA\text{-}256}(R_{i-1})$.
\end{definition}

\begin{definition}[Verifiable Interaction Ledger]
$\mathcal{L} = (R_1, \ldots, R_n)$ satisfying: (1) sequential consistency, (2) hash chain integrity, and (3) signature validity at every record.
\end{definition}

\subsection{Privacy Preservation}

The ledger stores only \emph{commitments} (hashes), not content. During normal operation, only interaction metadata is revealed. During forensic investigation, parties produce original content verified against stored commitments.

\subsection{Forensic Reconstruction Protocol}

The protocol proceeds in five steps: (1) identify the relevant interaction chain, (2) verify ledger integrity (signatures + hash chain), (3) request content disclosure and verify against commitments, (4) verify agent certificates at interaction time, and (5) execute replay verification for agents with reproducibility commitments.

\subsection{Chain Auditability}
\label{sec:chain_auditability}

Analogous to the Chain Verifiability Theorem for G2 (\S\ref{sec:chain_verifiability}), interaction auditability is a \textbf{chain property}: a single missing interaction record along an execution path breaks forensic reconstruction for all downstream nodes on that path.

\begin{definition}[Chain Auditability Depth]
\label{def:cad}
For execution chain $v_0 \to v_1 \to \cdots \to v_n$ with $v_0$ a non-agent trust anchor, define:
\begin{multline*}
\mathrm{CAD}(v_n) = \min\{j \geq 1 : \\
\text{record for } v_{j-1} \to v_j \text{ absent from } \mathcal{L}\}
\end{multline*}
with $\mathrm{CAD}(v_n) = n$ if all records are present (fully auditable chain).
\end{definition}

\begin{property}[Chain Auditability]
\label{prop:chain_auditability}
If $\mathrm{CAD}(v_n) = k < n$, then for all $i \geq k$:
\begin{enumerate}[nosep]
  \item \emph{Provenance gap}: the auditor cannot trace $v_i$'s output back to the trust anchor $v_0$, because the input $v_k$ received from $v_{k-1}$ is unrecorded.
  \item \emph{G2 cascade}: replay verification for $v_k$ is impossible---the sender-authenticated input $\eta_{\mathit{in}}$ required by Property~\ref{prop:g2_input_integrity} is unavailable. By induction, downstream G2 verification inherits the same gap even if $\rho_{v_i}.\mathit{level} = \mathit{full}$.
\end{enumerate}
\end{property}

This creates a symmetric pair of chain properties: G2's Chain Verifiability (Theorem~\ref{thm:chain_verifiability}) governs \emph{behavioral verification depth}, while G3's Chain Auditability governs \emph{forensic reconstruction depth}. Because G2 replay depends on G3's sender-signed input (Property~\ref{prop:g2_input_integrity}), the effective verification depth of any execution path is $\min(\mathrm{CVD}, \mathrm{CAD})$---the minimum of the two chain depths.

\subsection{G3 as a Governance Contract}

Like G1 and G2, G3 defines \emph{what} properties interaction records must satisfy, not \emph{how} the ledger is implemented. The governance contract requires: (1)~every inter-agent and agent-human interaction produces a record with bilateral signatures and content commitments; (2)~records are linked in a tamper-evident chain; (3)~sufficient metadata exists for forensic reconstruction.

Our current instantiation uses a \textbf{centralized append-only ledger} operated by a trusted organizational entity (e.g., the deploying enterprise). This is the simplest design that satisfies G3: agents submit bilaterally-signed records to the operator, who maintains the hash chain and provides query access during audits. The operator is assumed honest for \emph{availability} (it will not selectively omit records) and \emph{ordering} (it will not reorder records), while \emph{integrity} is enforced cryptographically regardless of operator behavior (bilateral signatures + hash chain prevent undetectable modification).

Alternative instantiations satisfying the same G3 contract include: (a)~\emph{multi-operator consensus}, where $f{+}1$-of-$n$ operators must agree on each append, eliminating single-operator availability trust at the cost of consensus latency; (b)~\emph{append-only transparency logs} analogous to Certificate Transparency, where multiple independent monitors cross-check the ledger for consistency; (c)~\emph{blockchain-anchored commitments}, where periodic ledger root hashes are anchored to a public blockchain for non-repudiable timestamping. Each alternative trades latency or operational complexity for stronger trust assumptions. The G3 contract is invariant across all instantiations---what changes is the trust model of the storage layer, not the governance semantics of the records.

\section{Formal Security Properties}
\label{sec:proofs}

We state seven security properties and two theorems under the adversary model of \S\ref{sec:threat}. The basic instantiation uses standard cryptographic assumptions: EUF-CMA security of Ed25519, collision resistance of SHA-256, and the DiFR token divergence property~\cite{difr}. The enhanced instantiation additionally relies on $q$-SDH and DDH in Type-3 bilinear groups (BBS+ unforgeability and zero-knowledge, \S\ref{sec:bbs_plus}) and knowledge soundness of Groth16 (DV-SNARK, \S\ref{sec:dv_snark}). All proofs are in Appendix~\ref{app:proofs}.

\begin{property}[Trust Containment]
\label{thm:trust_containment}
For all edges $(u, v) \in E$: $\kappa_v \leq_\kappa \kappa_u$.
\end{property}

\begin{property}[Capability Binding]
\label{thm:capability_binding}
If $H_v(t') \neq H_v^{\mathit{cert}}$, then $\forall t \geq t': \mathrm{Access}(v, c, t) = \mathrm{DENY}$.
\end{property}

\begin{property}[Chain Integrity]
\label{thm:chain_integrity}
Only chains forming a valid path from a trusted root with valid signatures pass verification.
\end{property}

\begin{property}[Delegation Depth Enforcement]
\label{thm:depth}
Agent delegation depth from any non-agent ancestor never exceeds $\mathit{max\_depth}$.
\end{property}

\begin{property}[Reproducibility Soundness]
\label{thm:reproducibility}
Under software G2 (replay verification): if $v$ executes $M' \neq M$ or $S' \neq S$, replay verification detects with probability $\geq 1 - \epsilon(\lambda)$. Under hardware G2 (TEE): detection is deterministic.
\end{property}

\begin{theorem}[Bounded Divergence Under Verification]
\label{thm:bounded_divergence_proof}
If $M'$ passes replay verification (CharMatch $\geq \theta$) on $n$ independent prompts from $\mathcal{D}$, then $\Pr_{x \sim \mathcal{D}}[\mathrm{CharMatch}(M(x), M'(x)) < \theta] \leq \epsilon$ where $\epsilon \leq 1 - \alpha^{1/n}$, with $\alpha$ the significance level.
\end{theorem}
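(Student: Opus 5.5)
The plan is to read the statement as a one-sided confidence bound: the failure rate $p$ is a fixed but unknown quantity, and the claimed bound holds at confidence $1-\alpha$. Taken literally as a deterministic implication, the statement is false. Any $M'$ with $p>0$ passes all $n$ checks with positive probability $(1-p)^n$. So I would first restate precisely what is proved.

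Fix the declared model $M$, the adversarial model $M'$, the threshold $\theta$ and the distribution $\mathcal{D}$, and define
\[
p = \Pr_{x \sim \mathcal{D}}[\mathrm{CharMatch}(M(x), M'(x)) < \theta].
\]
The $n$ verification prompts are drawn i.i.d.\ from $\mathcal{D}$ independently of the adversary's choice of $M'$. This independence must be an explicit assumption: the adversary cannot adapt $M'$ to the realized test prompts. I would also assume that each replay check evaluates exactly the event $\mathrm{CharMatch}\geq\theta$, taking the reference output $M(x_i)$ from the authenticated input guaranteed by Property~\ref{prop:g2_input_integrity}. Under these assumptions the indicators $Z_i=\mathbf{1}[\mathrm{CharMatch}(M(x_i),M'(x_i))\ge\theta]$ are i.i.d.\ Bernoulli$(1-p)$, so
\[
\Pr[\text{all } n \text{ pass}] = (1-p)^n .
\]

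The core step is a hypothesis test, equivalently the exact Clopper--Pearson bound with zero observed failures. Set $\epsilon^\ast = 1-\alpha^{1/n}$. For any $M'$ with $p>\epsilon^\ast$,
\[
\Pr[\text{all pass}] = (1-p)^n < (1-\epsilon^\ast)^n = \alpha .
\]
Hence the event "$M'$ passes all $n$ checks and $p>\epsilon^\ast$" has probability below $\alpha$ over the choice of prompts, uniformly in $M'$. Conversely, the set $\{p : (1-p)^n \ge \alpha\} = [0,\epsilon^\ast]$ is exactly the one-sided $(1-\alpha)$ confidence region after $n$ successes. This gives $p\le 1-\alpha^{1/n}$ with confidence $1-\alpha$, which is the statement of Theorem~\ref{thm:bounded_divergence_proof}.

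The simplifications follow by calculus. Since $\alpha^{1/n}=e^{-\ln(1/\alpha)/n}\ge 1-\ln(1/\alpha)/n$, we get $\epsilon^\ast\le \ln(1/\alpha)/n$, with near-equality for large $n$. Inverting gives the budget: $n\ge \ln(1/\alpha)/\epsilon$ suffices, because then $(1-\epsilon)^n\le e^{-\epsilon n}\le\alpha$.

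The main obstacle is this interpretive step: making "$\le \epsilon$" meaningful as a frequentist confidence statement rather than a sure implication, and justifying the independence between the test prompts and the adversary's choice of $M'$. I would handle it by stating the result as
\[
\Pr_{x_1,\dots,x_n}\bigl[\text{pass all} \wedge p > 1-\alpha^{1/n}\bigr] < \alpha ,
\]
and by noting that the skills $\mathcal{S}$ enter only through $\mathcal{D}$, so the bound is distribution-free and does not depend on $\theta$.
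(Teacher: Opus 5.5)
Your proposal is correct and follows the same route as the paper's proof: model the $n$ checks as i.i.d.\ Bernoulli$(1-p)$ trials, note that all pass with probability $(1-p)^n$, and take the exact one-sided Clopper--Pearson bound with zero failures, $p^\ast = 1-\alpha^{1/n}$. Your version is more careful than the paper's in three ways: you state explicitly that the conclusion is a $(1-\alpha)$-confidence statement rather than a sure implication, you require that $M'$ be fixed independently of the test prompts, and you replace the paper's small-$p$ approximation $p^\ast \approx \ln(1/\alpha)/n$ with the rigorous inequality $1-\alpha^{1/n} \le \ln(1/\alpha)/n$.
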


\begin{property}[Credential Isolation]
\label{thm:isolation}
$\mathrm{Accessible}(C') = \bigcup_{v \in C'} \mathrm{Accessible}(\{v\})$.
\end{property}

\begin{property}[Ledger Integrity]
\label{thm:ledger}
No party can undetectably modify, delete, or reorder records.
\end{property}

\begin{theorem}[Chain Verifiability]
\label{thm:chain_verifiability_proof}
Let $v_0 \to v_1 \to \cdots \to v_n$ be an execution chain where $v_0$ is a non-agent trust anchor, and let $\mathrm{CVD}(v_n) = k < n$ (Definition~\ref{def:chain_depth}). Then for all $i > k$, the output of $v_i$ cannot be verified as originating from a legitimate computation chain rooted at $v_0$, even if $\rho_{v_i}.\mathit{level} = \mathit{full}$.
\end{theorem}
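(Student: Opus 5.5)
The plan is to make ``verified as originating from a legitimate computation chain rooted at $v_0$'' precise and then argue by indistinguishability. I would define a \emph{chain verification predicate} for $v_i$ as any procedure that, given the ledger $\mathcal{L}$, the certificates $\mathit{cert}_{v_1},\ldots,\mathit{cert}_{v_i}$, and replay access to each node whose $\rho$ level is not \emph{none}, accepts only if $o_i = M_i \circ \cdots \circ M_1(x_0)$, where $x_0$ is the input signed by $v_0$ and $M_j$ is the declared model of $v_j$. The claim then becomes: for every $i > k$, no such procedure can accept the honest execution and also reject every adversarial execution consistent with the ledger. I would prove this by exhibiting two executions that look identical to the verifier.

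The construction comes first. Let $k = \mathrm{CVD}(v_n)$, so by Definition~\ref{def:chain_depth} we have $\rho_{v_k}.\mathit{level} = \mathit{none}$. Under adversary capability C2, $v_k$ can be compromised. In the honest world it outputs $o_k^* = M_k(o_{k-1})$. In the adversarial world it runs some $M' \neq M_k$ and outputs $o_k' \neq o_k^*$. Every other node behaves honestly in both worlds. Then $v_k$ signs its record with its valid key. Capability Binding (Property~\ref{thm:capability_binding}) is unaffected because the skills manifest is unchanged, and Ledger Integrity (Property~\ref{thm:ledger}) holds because every record is correctly formed. ReplayVerify on $v_k$ returns INCONCLUSIVE in both worlds, since $\rho.\mathit{level} = \mathit{none}$. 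So the G1 and G3 checks together with the $v_k$ replay check give the verifier the same verdict in both worlds, and a simulation argument shows the verifier's view up to $v_k$ carries no information about which world it is in.

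The second step is an induction on $i$ from $k+1$ to $n$. The hypothesis at $i$ is that the adversarial execution is accepted by every available per-node check, and yet $o_i$ differs from the legitimate value $o_i^*$, or more weakly, is not certified to equal it. For the step, $v_{i}$'s record carries $\eta_{\mathit{in}} = \mathrm{SHA\text{-}256}(o_{i-1}')$, signed by $v_{i-1}$ under bilateral signing (Property~\ref{prop:g2_input_integrity}). Replay with $\rho_{v_i}.\mathit{level}=\mathit{full}$ checks $M_i(o_{i-1}') = o_i'$, which holds because $v_i$ is honest. So replay verifies faithful processing of the recorded input, but it has no way to relate $o_{i-1}'$ to $o_{i-1}^*$. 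Every check that could bridge that gap must ultimately pass through $v_k$, where the only available check was INCONCLUSIVE.

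The main obstacle is that $o_i'$ may coincidentally equal $o_i^*$. For example, $M_i$ could be constant, or could collapse different inputs to the same output. If that happens, ``differs'' fails, and the theorem must be read as ``cannot be \emph{verified}'' rather than ``is wrong.'' I would handle this by stating the conclusion as a soundness failure: the verifier's acceptance is identical in both worlds, so whenever the two worlds differ at $v_i$ for some choice of $M'$ (which holds for any nontrivial declared pipeline), acceptance cannot imply legitimacy. I would state this non-degeneracy assumption explicitly, as an assumption about the model composition and not about the cryptography.
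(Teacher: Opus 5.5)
Your proposal is correct and matches the paper's proof: both induct on chain position from $i=k+1$, and the base case in both rests on the $\rho=\mathit{none}$ node $v_k$ running some $M'\neq M_k$ while still passing G1 and G3 (its replay returns INCONCLUSIVE), so that full replay at each downstream $v_i$ certifies only faithful processing of an input that cannot be tied back to $v_0$. Your explicit verification predicate and non-degeneracy caveat sharpen what the paper leaves implicit in its ``may differ'' wording, but one repair is needed: your honest and adversarial worlds differ in the committed and disclosed $o_k$, so the verifier's views do not literally coincide, and you should instead pair the adversarial world with an honest world whose declared model itself maps $o_{k-1}$ to $o_k'$, which the verifier cannot rule out precisely because $\rho_{v_k}.\mathit{level}=\mathit{none}$ gives it no means of evaluating $M_k$.
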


\begin{property}[G2 Input Integrity via Bilateral Signing]
\label{prop:g2_input_integrity}
Replay verification for a compromised agent $v_j$ uses \emph{sender-authenticated} input, not self-reported input. Consequently, G2 soundness does not circularly depend on $v_j$'s own honesty.
\end{property}

\textbf{Rationale.}
A potential concern is circular dependency: G2 replay verification requires the recorded input to be authentic, but if the compromised agent $v_j$ controls its own input recording, it could fabricate inputs that make its unauthorized output appear legitimate. We show this is precluded by the bilateral signing structure of G3.

In any execution chain, agent $v_j$'s input is the output of its predecessor $v_{j-1}$. The interaction record $R_i$ in the ledger contains the input commitment $\eta_{\mathit{in}} = \mathrm{SHA\text{-}256}(\mathit{input})$ and is \emph{bilaterally signed} by both $v_{j-1}$ (sender) and $v_j$ (receiver). We consider three cases for $v_{j-1}$:

\emph{Case 1: $v_{j-1}$ is a non-agent node (human or organization).} By the adversary model (\S\ref{sec:threat}), non-agent nodes are uncompromised. $v_{j-1}$ signs the interaction record containing $\eta_{\mathit{in}}$. Agent $v_j$ cannot modify $\eta_{\mathit{in}}$ without invalidating $v_{j-1}$'s signature (EUF-CMA). Therefore the input committed to the ledger is the true input $v_j$ received.

\emph{Case 2: $v_{j-1}$ is an honest agent.} The same argument applies: $v_{j-1}$ signs its output and the corresponding ledger record. $v_j$ cannot forge $v_{j-1}$'s signature on a different $\eta_{\mathit{in}}$.

\emph{Case 3: $v_{j-1}$ is also compromised.} Then $v_{j-1}$'s output is itself unverifiable. However, the Chain Verifiability Theorem (Theorem~\ref{thm:chain_verifiability_proof}) already establishes that all nodes downstream of the \emph{first} unverifiable agent lose end-to-end verification guarantees. In this case, $v_j$ is downstream of the already-compromised $v_{j-1}$, so the chain verifiability depth $\mathrm{CVD} \leq j-1 < j$, and no claim of end-to-end verification extends to $v_j$ regardless.

In Cases~1 and~2, the verifier reconstructs $v_j$'s input via the forensic reconstruction protocol (\S\ref{sec:ledger}): request $v_{j-1}$ to disclose the original content, verify it against the sender-signed $\eta_{\mathit{in}}$, then replay $v_j$'s declared model on this \emph{authenticated} input. The dependency is therefore \textbf{unidirectional}---G2 relies on G3's sender-side input commitment, which is established independently of $v_j$---not circular.

\textbf{Chain Auditability.} This unidirectional dependency implies that G3 also exhibits a chain property: if an interaction record along an execution path is absent from the ledger, G2 replay verification at that point becomes impossible (the authenticated input is unavailable), and by induction all downstream provenance tracing fails. The effective verification depth of any execution path is therefore $\min(\mathrm{CVD}, \mathrm{CAD})$---the minimum of Chain Verifiability Depth (Definition~\ref{def:chain_depth}) and Chain Auditability Depth (Definition~\ref{def:cad}). This symmetric pair of chain properties---G2 governing behavioral verification depth, G3 governing forensic reconstruction depth---underscores the mutual reinforcement of the governance requirements.

\section{Implementation and Evaluation}
\label{sec:eval}

\subsection{Prototype Architecture}

We implement the framework as two Rust crates: \texttt{agent-trust-kit} (open-source SDK) and \texttt{a2auth} (credential governance application). Table~\ref{tab:implementation} summarizes the implementation.
\begin{table}[t]
\centering
\caption{Implementation summary.}
\label{tab:implementation}
\small
\resizebox{\columnwidth}{!}{%
\begin{tabular}{@{}llr@{}}
\toprule
\textbf{Component} & \textbf{Dependencies} & \textbf{LoC} \\
\midrule
Core SDK (cert, skills, trust tree, & \texttt{ring}, \texttt{sha2}, \texttt{ed25519-dalek}, & ${\sim}$5,600 \\
\quad ledger, vault, registry, MCP) & \texttt{sled}, \texttt{aes-gcm}, \texttt{clap} & \\
Enhanced G1: BBS+ disclosure & \texttt{zkryptium} & ${\sim}$270 \\
Enhanced G2: DV-SNARK circuit & \texttt{ark-groth16}, \texttt{ark-r1cs-std} & ${\sim}$350 \\
Real LLM repro. validation & \texttt{anthropic}, \texttt{sklearn} (Python) & ${\sim}$870 \\
Adversarial G2 validation & \texttt{transformers}, \texttt{peft} (Python) & ${\sim}$2,680 \\
MCP governance proxy demo & \texttt{cryptography}, \texttt{cbor2} (Python) & ${\sim}$520 \\
\midrule
\textbf{Total (Rust + Python)} & & ${\sim}$\textbf{11,090} \\
\bottomrule
\end{tabular}}
\end{table}

\subsection{Performance Evaluation}

We evaluate on Apple M2, 16\,GB RAM, macOS (Criterion.rs, 10,000 samples, 3s warmup). Table~\ref{tab:micro} summarizes microbenchmarks for the four core governance primitives.
\begin{table}[H]
\centering
\caption{Core governance microbenchmarks (Experiments~1--4).}
\label{tab:micro}
\small
\begin{tabular}{@{}llr@{}}
\toprule
\textbf{Operation} & \textbf{Key parameter} & \textbf{Latency} \\
\midrule
\multicolumn{3}{@{}l}{\emph{Exp 1: Certificate chain verification}} \\
\quad Depth-2 chain & 2 Ed25519 sigs & 66.1\,\textmu s \\
\quad Depth-5 chain (max practical) & 5 Ed25519 sigs & 163.6\,\textmu s \\
\quad Capability binding & 32-byte compare & $<$1\,ns \\
\midrule
\multicolumn{3}{@{}l}{\emph{Exp 2: Skills manifest hash}} \\
\quad 10 tools & Sort+CBOR+SHA-256 & 5.10\,\textmu s \\
\quad 100 tools & Sort+CBOR+SHA-256 & 51.6\,\textmu s \\
\midrule
\multicolumn{3}{@{}l}{\emph{Exp 3: Ledger operations}} \\
\quad Append record & Ed25519 sign+sled & 448\,\textmu s \\
\quad Verify single record & Ed25519 verify & 32.4\,\textmu s \\
\quad Full audit (100K records) & Chain walk & 3.46\,s \\
\midrule
\multicolumn{3}{@{}l}{\emph{Exp 4: Reproducibility verification}} \\
\quad Token compare (512 tokens) & Byte compare & 416\,ns \\
\quad Commit (512 tokens) & SHA-256 & 6.55\,\textmu s \\
\quad Batch verify (100$\times$512 tok) & & 41.8\,\textmu s \\
\bottomrule
\end{tabular}
\end{table}

Verification scales at ${\sim}$33\,\textmu s/sig; token comparison at ${\sim}$0.73\,ns/token (100K interactions audited in ${\sim}$42\,ms).

\textbf{Experiment 5: Real LLM Reproducibility Validation.} We validate the DiFR finding across \textbf{9 models from 7 providers} (Claude Sonnet~4/Haiku~4.5, GPT-4.1/4.1~mini, Gemini~2.5~Pro, Grok~4, DeepSeek~V3, Llama~4~Maverick, Kimi~K2.5) via OpenRouter with temperature$=$0, 100 prompts across six task types, 10 repetitions per model-prompt pair (5,400 API calls; 15,120 pairwise comparisons). We compute four similarity metrics---CharMatch, Jaccard, TF-IDF cosine, char n-gram cosine---and use Cohen's $d$ and Youden's $J$ for separation analysis and threshold selection.

\emph{Result 5b.} Table~\ref{tab:exp5b} shows multi-provider separation ($p \approx 0$ for all metrics).
\begin{table}[H]
\centering
\caption{Multi-provider separation: 9 models, 7 providers (15,120 comparisons).}
\label{tab:exp5b}
\small
\begin{tabular}{@{}lrrrr@{}}
\toprule
\textbf{Metric} & \textbf{Same} & \textbf{Cross} & \textbf{Sep.} & \textbf{$d$} \\
\midrule
CharMatch        & 0.538 & 0.114 & $4.7\times$ & 1.36 \\
Jaccard          & 0.710 & 0.321 & $2.2\times$ & 1.51 \\
TF-IDF cosine    & 0.839 & 0.586 & $1.4\times$ & 1.08 \\
Char n-gram cos  & 0.871 & 0.640 & $1.4\times$ & 1.20 \\
\bottomrule
\end{tabular}
\end{table}

All metrics achieve Cohen's $d > 1.0$; Jaccard ($d = 1.51$) and CharMatch ($d = 1.36$) provide the strongest separation.

\emph{Result 5c.} Table~\ref{tab:exp5c} ranks per-model determinism.
\begin{table}[H]
\centering
\caption{Per-model same-model reproducibility (CharMatch, temperature$=$0).}
\label{tab:exp5c}
\small
\begin{tabular}{@{}llrr@{}}
\toprule
\textbf{Model} & \textbf{Provider} & \textbf{CharMatch} & \textbf{Std} \\
\midrule
Claude Haiku 4.5   & Anthropic & 0.987 & 0.088 \\
Claude Sonnet 4    & Anthropic & 0.823 & 0.273 \\
GPT-4.1 mini       & OpenAI    & 0.701 & 0.335 \\
GPT-4.1            & OpenAI    & 0.617 & 0.363 \\
Llama 4 Maverick   & Meta      & 0.528 & 0.360 \\
Grok 4$^\dagger$   & xAI       & 0.402 & 0.343 \\
Gemini 2.5 Pro$^\dagger$ & Google & 0.392 & 0.397 \\
DeepSeek V3        & DeepSeek  & 0.225 & 0.271 \\
Kimi K2.5$^\dagger$ & Moonshot & 0.169 & 0.233 \\
\bottomrule
\multicolumn{4}{@{}l}{\footnotesize $^\dagger$Reasoning/thinking model (internal chain-of-thought).}
\end{tabular}
\end{table}

Determinism spans $5.8\times$ (Haiku~4.5: $0.987$ to Kimi~K2.5: $0.169$); \emph{even the least deterministic model} exceeds mean cross-provider similarity ($0.114$). Per-task separation ranges from $12.2\times$ (math) to $2.6\times$ (JSON).

\emph{Result 5e.} Via Youden's $J$ (Table~\ref{tab:exp5e}), Jaccard and CharMatch attain $F_1 > 0.87$ across all 9 models, confirming lightweight textual similarity provides practical model identity verification without model internals.
\begin{table}[H]
\centering
\caption{Optimal classification thresholds (multi-provider, 15,120 comparisons).}
\label{tab:exp5e}
\small
\begin{tabular}{@{}lrrr@{}}
\toprule
\textbf{Metric} & $\theta^*$ & \textbf{Youden's $J$} & $F_1$ \\
\midrule
Jaccard          & 0.408 & 0.592 & 0.876 \\
CharMatch        & 0.146 & 0.572 & 0.873 \\
Char n-gram cos  & 0.809 & 0.538 & 0.855 \\
TF-IDF cosine    & 0.837 & 0.466 & 0.844 \\
\bottomrule
\end{tabular}
\end{table}

\textbf{Experiment 6: Adversarial Model Substitution Detection.} We train three QLoRA backdoor variants (keyword B1, semantic B2, compositional B3) on three architectures (Qwen~2.5~7B, Llama~3.1~8B, Gemma~2~9B) and test G2 replay detection across four prompt strategies at budgets $n \in \{10, 25, 50, 100, 200\}$ (357 configs).
\begin{table}[H]
\centering
\caption{G2 adversarial detection rate at temperature${}=0$ (3 models $\times$ 4 strategies).}
\label{tab:adversarial}
\small
\begin{tabular}{@{}lrrrr@{}}
\toprule
\textbf{Budget ($n$)} & \textbf{B1} & \textbf{B2} & \textbf{B3} & \textbf{Avg} \\
\midrule
10  & 99.0\% & 99.2\% & 99.2\% & 99.1\% \\
25  & 100\%  & 100\%  & 100\%  & 100\%  \\
200 & 99.1\% & 98.9\% & 99.1\% & 99.1\% \\
\bottomrule
\end{tabular}
\end{table}

Detection is near-universal (minimum 90\% across all 357 configs). Mean CharMatch between base and backdoored outputs is 0.057, far below $\theta^* = 0.146$: modifying weights to implant an effective backdoor unavoidably shifts the clean-input distribution, making substitution detectable (\S\ref{sec:limitations}).

\subsection{Overhead and Comparison}
\textbf{Experiment 7: End-to-End MCP Overhead.}
\begin{table}[H]
\centering
\small
\begin{tabular}{@{}lr@{}}
\toprule
\textbf{Phase} & \textbf{Latency} \\
\midrule
MCP request parse & 0.02\,ms \\
Certificate chain verification & 0.097\,ms \\
Skills hash comparison & $<$0.001\,ms \\
Proxy token validation & 0.05\,ms \\
Ledger append & 0.448\,ms \\
\midrule
\textbf{Total governance overhead} & \textbf{${\sim}$0.62\,ms} \\
Typical MCP call (no governance) & 50--500\,ms \\
\textbf{Overhead as \% of total} & \textbf{0.1--1.2\%} \\
\bottomrule
\end{tabular}
\end{table}

\textbf{Experiment 8: Enhanced Instantiation Benchmarks.}
\begin{table}[H]
\centering
\caption{Enhanced cryptographic backends: BBS+ (G1) and Groth16 DV-SNARK (G2).}
\label{tab:bbs_snark_bench}
\small
\begin{tabular}{@{}lrr@{}}
\toprule
\textbf{Operation} & \textbf{Latency} & \textbf{vs.\ Basic} \\
\midrule
\multicolumn{3}{@{}l}{\emph{BBS+ selective disclosure (16-message cert)}} \\
\quad Full disclosure verification & 12.27\,ms & $127\times$ \\
\quad Selective proof gen (6/16) & 14.66\,ms & $151\times$ \\
\quad Selective proof verification & 13.80\,ms & $142\times$ \\
\midrule
\multicolumn{3}{@{}l}{\emph{Groth16 DV-SNARK CharMatch circuit}} \\
\quad Prove (1,024 chars, 19K constraints) & 173.86\,ms & --- \\
\quad Prove (4,096 chars, ${\sim}$73K constraints) & 750.33\,ms & --- \\
\quad Verify (constant) & 1.28\,ms & --- \\
\bottomrule
\end{tabular}
\end{table}

\textbf{Experiment 9: Comparison with BAID.}
\begin{table}[H]
\centering
\small
\resizebox{\columnwidth}{!}{%
\begin{tabular}{@{}lrrr@{}}
\toprule
\textbf{Metric} & \textbf{BAID} & \textbf{Ours (basic)} & \textbf{Ours (enhanced)} \\
\midrule
G1 verification & --- & 0.097\,ms & 13.80\,ms \\
G2 proof gen (1K) & ${\sim}$120\,s & ${\sim}$0.1\,ms$^\dagger$ & 174\,ms \\
G2 proof verify & ${\sim}$10\,ms & $<$0.01\,ms$^\dagger$ & 1.3\,ms \\
Circuit size & ${\sim}10^{10}$ & --- & 19,238 \\
Privacy & Full ZK & None & Selective / ZK \\
Infrastructure & Blockchain & None & None \\
\bottomrule
\multicolumn{4}{@{}l}{\footnotesize $^\dagger$Direct CharMatch comparison (no SNARK).}
\end{tabular}}
\end{table}

BBS+ adds ${\sim}142\times$ overhead but enables selective disclosure; DV-SNARK verification is constant at ${\sim}1.3$\,ms. Our DV-SNARK is ${\sim}690\times$ faster than BAID for proof generation (${\sim}520{,}000\times$ fewer constraints), by SNARKing only the verification metric rather than the full transformer forward pass. All enhanced operations remain within typical agent interaction latency (50--500\,ms).

\textbf{Experiment 10: MCP Governance Proxy Demo.} A Python proxy (520 lines) enforcing G1--G3 detects all 5 live attack scenarios (tool injection, trojanization, forged certificate, rate abuse, ledger tampering) in $\leq$165\,\textmu s each, achieving 8,195 governed calls/sec with mean overhead of 120\,\textmu s/call.

\subsection{Attack Scenario Evaluation}
We evaluate 12 attack scenarios at three levels of validation (Table~\ref{tab:attacks}): (1)~\emph{Structural governance attacks} (S1--S9) are tested as automated Rust unit tests that verify the governance algorithms correctly reject each attack construction---these validate \emph{algorithm correctness}, not resistance to real-world attackers. (2)~\emph{Runtime behavioral attacks} (S10--S11), informed by the Nemotron-AIQ dataset~\cite{ghosh_framework}, are validated via \emph{post-hoc forensic tracing}: the ledger provides sufficient evidence to identify the injection point after the fact, but the framework does not prevent the attack in real time. (3)~S12 (output exfiltration) is detectable via replay divergence. The 6 E2E attacks in \S\ref{sec:e2e} provide the strongest validation: real-time detection in a live pipeline with real LLM calls.
\begin{table*}[t]
\centering
\caption{Attack scenario evaluation. S1--S9: algorithm correctness validated via automated unit tests (9/9 pass). S10--S11: post-hoc forensic tracing validated against Nemotron-AIQ traces. S12: detectable via replay divergence. The ``Framework'' column distinguishes \emph{Prevented} (blocked in real time), \emph{Detected} (identified but not blocked), and \emph{Traced} (forensic reconstruction after the fact).}
\label{tab:attacks}
\small
\begin{tabular}{@{}llccl@{}}
\toprule
\textbf{Scenario} & \textbf{Attack} & \textbf{Baseline} & \textbf{Framework} & \textbf{Detection Mechanism} \\
\midrule
S1: Silent escalation & Add tool at runtime & Undetected & Prevented & Skills manifest hash mismatch \\
S2: Tool trojanization & Replace implementation & Undetected & Prevented & Code hash change \\
S3: Model substitution & Forged cert, wrong model & Undetected & Detected & Signature verification failure \\
S4: Phantom delegation & T2 issues T1 cert & Undetected & Prevented & Constraint monotonicity \\
S5: Evidence tampering & Modify ledger record & N/A & Detected & Hash chain + sig invalidation \\
S6: Blame shifting & Dispute in chain & Unresolvable & Resolved & Ed25519 bilateral signing \\
S7: Credential collusion & T3+T3 $\to$ T1 access & Undetected & Prevented & Independent tier evaluation \\
S8: Chain forgery & Fabricated chain & --- & Prevented & Sig verify: fake key $\neq$ root \\
S9: Depth overflow & Delegate at depth 0 & --- & Prevented & Depth counter exhausted \\
\midrule
S10: Knowledge poisoning & Inject via tool output & Undetected$^\dagger$ & Traced & Ledger $\eta_{\mathit{in}}/\eta_{\mathit{out}}$ \\
S11: Cascading hijack & Multi-stage persistence & Undetected$^\dagger$ & Traced & Hash-chain provenance audit \\
S12: Output exfiltration & Embed phishing link & Undetected$^\dagger$ & Detected & Replay divergence \\
\bottomrule
\multicolumn{5}{@{}l}{\footnotesize $^\dagger$100\% attack success in Nemotron-AIQ undefended traces (2,596 security trials).}
\end{tabular}
\end{table*}

\subsection{End-to-End Multi-Agent Pipeline}
\label{sec:e2e}

We deploy the full governance stack in a multi-agent pipeline with real Claude API calls, measuring overhead, scaling, attack detection, and baseline comparison (Figure~\ref{fig:e2e_pipeline}).
\begin{figure}[t]
  \centering
  \includegraphics[width=\columnwidth]{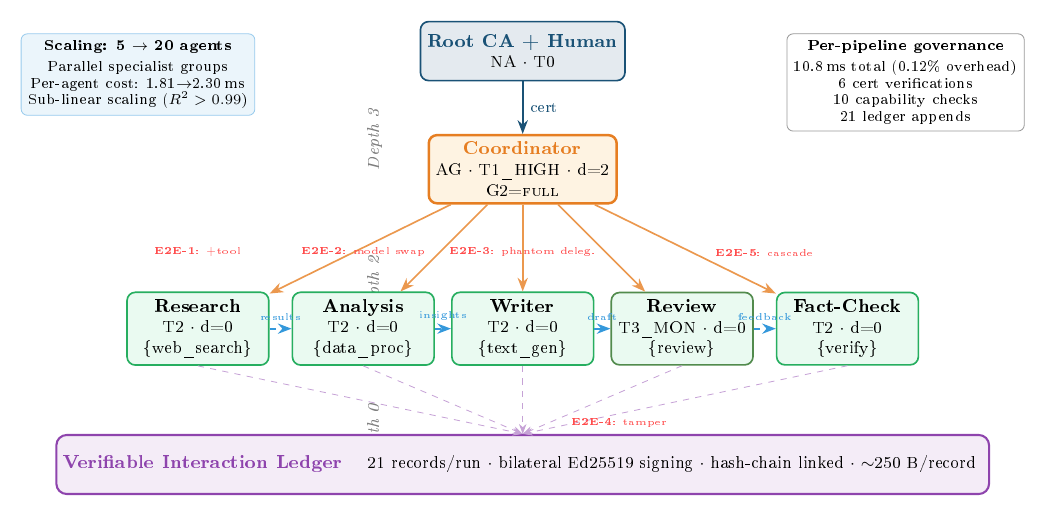}
  \caption{Multi-agent pipeline with governance overlay. Red labels indicate attack injection points (Table~\ref{tab:e2e_attacks}).}
  \label{fig:e2e_pipeline}
\end{figure}

\textbf{Pipeline architecture (Figure~\ref{fig:e2e_pipeline}).} We implement an AI Research Assistant using LangGraph with the Rust \texttt{agent-trust-kit} SDK exposed via PyO3 FFI. A Coordinator (T1\_HIGH) orchestrates four specialist agents---Research~(T2), Analysis~(T2), Writer~(T2), and Review~(T3\_MONITORED)---in a depth-3 trust tree. Each pipeline run produces 7 governed tool calls, each generating a bilaterally-signed ledger record with capability verification and reproducibility commitment.

\textbf{Experiment 11a: Governance overhead.}
\begin{table}[H]
\centering
\caption{Per-pipeline governance overhead (5-agent config, 7 ledger entries/run, mean of 90 runs with real Claude API calls).}
\label{tab:e2e_overhead}
\small
\begin{tabular}{@{}lrrr@{}}
\toprule
\textbf{Operation} & \textbf{Per-call} & \textbf{Calls} & \textbf{Total} \\
\midrule
G1: Capability check & 39\,\textmu s & 7 & 0.27\,ms \\
G2: Repro.\ commitment & 86\,\textmu s & 7 & 0.60\,ms \\
G3: Ledger append & 1,300\,\textmu s & 7 & 9.10\,ms \\
\midrule
\textbf{Total governance} & & \textbf{7} & \textbf{10.0\,ms} \\
\midrule
Pipeline latency (real Claude API) & & & ${\sim}$59.8\,s \\
\textbf{Overhead} & & & \textbf{$<$0.02\%} \\
\bottomrule
\end{tabular}
\end{table}

Governance adds ${\sim}$10\,ms to a ${\sim}$60\,s real-LLM pipeline ($<$0.02\%), dominated by G3 ledger appends (91\% of governance cost). Across 90 clean runs (630 governed tool calls), \textbf{zero false positives} and \textbf{100\% ledger integrity} are observed (95\% CI: FPR $<$ 3.3\%).

\textbf{Experiment 11b: Scaling behavior.}
\begin{table}[H]
\centering
\caption{Governance scaling: 5 $\to$ 20 agents (mock tool latency to isolate governance cost).}
\label{tab:e2e_scaling}
\small
\resizebox{\columnwidth}{!}{%
\begin{tabular}{@{}lrrrrr@{}}
\toprule
\textbf{Config} & \textbf{Agents} & \textbf{Ledger} & \textbf{Gov.} & \textbf{Per-agent} & \textbf{Storage} \\
 & & \textbf{entries} & \textbf{(ms)} & \textbf{(ms)} & \textbf{(KB)} \\
\midrule
Small & 5 & 7 & 6.9 & 1.37 & 1.7 \\
Medium & 10 & 16 & 12.9 & 1.29 & 3.9 \\
Large & 20 & 33 & 30.0 & 1.50 & 8.1 \\
\bottomrule
\end{tabular}}
\end{table}

Governance scales linearly with ledger entries; per-agent cost remains approximately constant (1.29--1.50\,ms across $4\times$ scaling). Storage is ${\sim}$250\,bytes/entry (8.1\,KB for 20 agents).

\textbf{Experiment 11c: End-to-end attack detection.}
\begin{table}[H]
\centering
\caption{End-to-end attack detection (real Claude API calls, 7 scenarios covering G1/G2/G3).}
\label{tab:e2e_attacks}
\small
\resizebox{\columnwidth}{!}{%
\begin{tabular}{@{}llll@{}}
\toprule
\textbf{Attack} & \textbf{Target} & \textbf{Layer} & \textbf{Detection} \\
\midrule
E2E-1: Cap.\ escalation & Research & G1 & Tool not in manifest \\
E2E-2: Model substitution & Analysis & G1 & Model binding mismatch \\
E2E-3: Phantom delegation & Writer & G1 & No valid cert chain \\
E2E-4: Ledger tampering & Ledger & G3 & Hash chain break \\
E2E-5: Tool trojanization & Research & G1 & Manifest hash mismatch \\
E2E-6: Depth overflow & Review & G1 & Trust decay violation \\
E2E-7: Replay divergence & Analysis & G2 & CharMatch 0.00 $<$ $\theta$ \\
\bottomrule
\end{tabular}}
\end{table}

All seven attacks are detected, covering all three governance layers. E2E-2 is caught by G1's model binding check: the certificate declares \texttt{claude-sonnet-4} but the agent uses \texttt{claude-haiku-4-5}. E2E-7 is the only attack requiring G2 (replay verification): the analysis agent's certificate and model binding are correct (G1 passes), but the agent secretly produces attacker-controlled output instead of executing the declared model. Post-pipeline replay re-executes the declared model on the stored input and detects divergence (CharMatch$=$0.00 vs.\ $\theta$$=$0.85)---demonstrating that G2 catches attacks invisible to G1 and G3.

\textbf{Experiment 11d: Baseline comparison.}
We evaluate the same seven attacks against three baselines.
\begin{table}[H]
\centering
\caption{E2E attack detection: A2Auth vs.\ baselines ($\checkmark$ = detected, \ding{55} = undetected).}
\label{tab:e2e_baseline}
\small
\resizebox{\columnwidth}{!}{%
\begin{tabular}{@{}lcccc@{}}
\toprule
\textbf{Attack} & \textbf{None} & \textbf{OAuth} & \textbf{OTEL} & \textbf{A2Auth} \\
\midrule
E2E-1: Cap.\ escalation & \ding{55} & \ding{55} & \ding{55} & $\checkmark$ (G1) \\
E2E-2: Model substitution & \ding{55} & \ding{55} & \ding{55} & $\checkmark$ (G1) \\
E2E-3: Phantom delegation & \ding{55} & \ding{55}$^a$ & \ding{55} & $\checkmark$ (G1) \\
E2E-4: Ledger tampering & N/A & N/A & \ding{55}$^b$ & $\checkmark$ (G3) \\
E2E-5: Tool trojanization & \ding{55} & \ding{55} & \ding{55} & $\checkmark$ (G1) \\
E2E-6: Depth overflow & \ding{55} & \ding{55} & \ding{55} & $\checkmark$ (G1) \\
E2E-7: Replay divergence & \ding{55} & \ding{55} & \ding{55} & $\checkmark$ (G2) \\
\midrule
\textbf{Total detected} & \textbf{0/7} & \textbf{0/7} & \textbf{0/7} & \textbf{7/7} \\
\bottomrule
\multicolumn{5}{@{}p{0.93\columnwidth}}{\footnotesize $^a$OAuth authenticates the agent but has no delegation depth constraint. $^b$OTEL traces can be silently modified (no hash chain or signatures).} \\
\end{tabular}}
\end{table}

OAuth~2.1 verifies \emph{who} the agent is but not \emph{what tools it currently has} or \emph{which model it runs}. Only A2Auth (G1+G2+G3) detects all seven E2E attacks, at $<$0.02\% overhead.

\textbf{Experiment 11e: Governance coverage comparison with BAID and Ghosh et al.}
Table~\ref{tab:e2e_baseline} compares against infrastructure baselines; Table~\ref{tab:governance_coverage} compares against the two closest \emph{research} systems---BAID~\cite{baid} (zkVM-based verifiable inference) and Ghosh et al.~\cite{ghosh_framework} (operational safety with Nemotron-AIQ)---at the governance requirement level across all 12 attack categories. This is a design-level analysis: we map each system's published mechanisms to the attack taxonomy of \S\ref{sec:threat}.

\begin{table*}[t]
\centering
\caption{Governance coverage: A2Auth vs.\ BAID and Ghosh et al.\ across 12 attack categories. $\CIRCLE$ = prevented/detected in real time; $\LEFTcircle$ = partially addressed or post-hoc only; \ding{55} = not addressed. Analysis is based on each system's published design; BAID and Ghosh et al.\ were not re-implemented.}
\label{tab:governance_coverage}
\small
\begin{tabular}{@{}llcccp{4.2cm}@{}}
\toprule
\textbf{Attack} & \textbf{Requires} & \textbf{BAID} & \textbf{Ghosh} & \textbf{A2Auth} & \textbf{Notes} \\
\midrule
ATK-1: Capability escalation & G1 & \ding{55} & $\LEFTcircle$ & $\CIRCLE$ & BAID verifies computation, not enrollment. Ghosh has policy scoping but no cryptographic binding. \\
ATK-2: Model substitution & G2 & $\CIRCLE$ & \ding{55} & $\CIRCLE$ & BAID provides deterministic proof via zkVM; A2Auth provides statistical detection via replay. \\
ATK-3: Tool trojanization & G1 & \ding{55} & \ding{55} & $\CIRCLE$ & Neither BAID nor Ghosh hashes tool implementations. \\
ATK-4--7: Delegation attacks & G1 & \ding{55} & \ding{55} & $\CIRCLE$ & No trust tree or delegation constraints in either system. \\
ATK-8: Evidence tampering & G3 & \ding{55} & $\LEFTcircle$ & $\CIRCLE$ & Ghosh uses OTEL (no hash chain); BAID has no ledger. \\
ATK-9: Blame shifting & G2+G3 & $\LEFTcircle$ & $\LEFTcircle$ & $\CIRCLE$ & BAID proves per-execution integrity but lacks bilateral signing for attribution. \\
ATK-10--11: Injection/hijack & G3 (trace) & \ding{55} & $\LEFTcircle$ & $\LEFTcircle$ & Ghosh traces via OTEL (no integrity); A2Auth traces via hash-chain ledger. Neither prevents in real time. \\
ATK-12: Output exfiltration & G2 & $\CIRCLE$ & \ding{55} & $\LEFTcircle$ & BAID proves correct execution deterministically; A2Auth detects via replay divergence. \\
\midrule
\multicolumn{2}{@{}l}{\textbf{Governance requirements}} & G2 only & Partial G3 & G1+G2+G3 & \\
\multicolumn{2}{@{}l}{\textbf{G2 proof generation}} & ${\sim}$120\,s & --- & ${\sim}$0.1\,ms$^*$ / 174\,ms$^\dagger$ & \\
\multicolumn{2}{@{}l}{\textbf{Infrastructure}} & Blockchain & None & None & \\
\bottomrule
\multicolumn{6}{@{}p{0.95\textwidth}}{\footnotesize $^*$Basic instantiation (direct CharMatch). $^\dagger$Enhanced instantiation (DV-SNARK). BAID provides \emph{computational integrity} (the declared model provably executed the computation)---a strictly stronger G2 guarantee than A2Auth's \emph{statistical behavioral verification}. However, BAID addresses only G2: it cannot detect capability escalation (G1) or audit interaction chains (G3). Ghosh et al.'s operational framework provides guardrails and red-teaming methodology but lacks cryptographic enforcement for any governance requirement. The three systems are complementary, not competing: BAID's zkVM could serve as an enhanced G2 backend within the A2Auth architecture; Ghosh's guardrails address content safety orthogonal to governance.} \\
\end{tabular}
\end{table*}

\section{Limitations, Governance Trilemma, and Future Work}
\label{sec:limitations}

\subsection{Scope of Contribution}

Our contribution is a governance \emph{architecture}, not new cryptographic primitives. The building blocks (digital signatures, hash chains, replay verification) are individually well-known; the contribution lies in identifying the capability-context separation as a principled governance boundary, proving structural constraints (Chain Verifiability, Bounded Divergence) that arise from this architecture, and defining abstract governance contracts (G1--G3) that hold across fundamentally different cryptographic backends.

\textbf{Why the composition is non-obvious.} A counterfactual test: given Ed25519, SHA-256, and hash chains, would a security engineer independently arrive at G1--G3? The evidence suggests not. The MCP ecosystem has grown to over 1,000 tool servers with OAuth~2.1 as its security mechanism, yet no deployment detects capability drift (G1 gap). OpenTelemetry is the industry standard for agent observability, yet the Nemotron-AIQ dataset~\cite{ghosh_framework} demonstrates that 10,796 OTEL traces provide no tamper evidence (G3 gap). BAID~\cite{baid} constructs a zkVM over the full transformer pass, yet cannot detect whether the agent's \emph{tool set} has changed (G1 gap). Each existing approach addresses a fragment; none recognizes the capability-context separation as the organizing principle, and none arrives at the three-requirement conjunction that Table~\ref{tab:governance_gap} shows is necessary and sufficient. The \emph{composition} and the \emph{boundary identification} are the contributions, validated by two instantiations (basic and enhanced) satisfying the same nine security properties under standard assumptions (CR, EUF-CMA; $q$-SDH, DDH, knowledge soundness).

\subsection{The Agent Governance Trilemma}
\label{sec:trilemma}

A fundamental question underlies the design of any agent governance framework: \emph{why can't we simply have agents that are maximally capable, maximally fast, and maximally secure?} We observe that this is not merely a current engineering limitation, but a structural constraint rooted in computability theory.

\begin{observation}[Agent Governance Trilemma, informal]
\label{obs:trilemma}
For agent systems operating under a fixed computational budget, three desiderata---\emph{capability}, \emph{performance}, and \emph{security}---cannot be simultaneously maximized. Formally:
\begin{enumerate}[nosep]
  \item \textbf{High Capability + High Security $\Rightarrow$ Performance $\leq 1/2$.} If an agent's tool set is Turing-complete (unbounded capability) and full behavioral verification is required (high security), then verification costs at least as much as the original execution, so at most half the computational budget produces useful work.
  \item \textbf{Bounded Capability + High Security $\Rightarrow$ Performance $\to 1$.} If the tool set is non-Turing-complete (bounded capability), verification reduces to lookup in a pre-computed table of valid input-output pairs, with $O(1)$ per-query overhead.
  \item \textbf{High Capability + High Performance $\Rightarrow$ Security degrades.} If the agent is Turing-complete and most of the budget goes to useful work, insufficient resources remain for verification, and the verifier cannot detect all semantic deviations.
\end{enumerate}
\end{observation}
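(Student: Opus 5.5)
The plan is to first fix a minimal formal model, since the observation is stated informally, and then derive each of the three clauses from it.

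\textbf{Model.} An agent has a total computational budget $B$ per interaction. This budget is split as $B = W + V$, where $W$ is useful execution work and $V$ is verification work. Performance is the ratio $P = W/B$. Security is full behavioral verification: for every executed input, the verifier decides whether the observed output is the one the declared model and skills $(M, S)$ would produce. Capability is the class of functions the tool set can realize. It is ``high'' when that class is Turing-complete and ``bounded'' when the tool set has a finite (or finitely tabulable) input--output relation.

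\textbf{Clause 1 (High Capability + High Security $\Rightarrow P \le 1/2$).} I will argue in two steps.

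First, by Rice's theorem, no verifier can decide a nontrivial semantic property of an arbitrary Turing-complete tool set, such as ``this program's output equals $M$'s output on $x$'', by static inspection alone. Any sound and complete verifier must therefore, on some inputs, actually simulate the computation.

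Second, I would make this quantitative with a standard reduction. Suppose a verifier spent fewer steps than the execution on a worst-case input. Then the verifier could be used to predict the output of a computation faster than running it. For unrestricted programs, the time-hierarchy / halting-style diagonalization rules this out in the worst case. Hence $V \ge W$ on worst-case inputs, which gives $W \le B/2$, i.e.\ $P \le 1/2$.

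This is exactly the replay regime of \S\ref{sec:reproducibility}: \texttt{ReplayVerify} re-executes the model once per verified interaction.

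\textbf{Clause 2 (Bounded Capability + High Security $\Rightarrow P \to 1$).} If the tool set's input--output relation is finite, the verifier can precompute a table of valid pairs once, offline. Each query is then a hash lookup in $O(1)$. Amortized over $N$ interactions, this gives $V/B \to 0$, so $P \to 1$. The offline precomputation cost is constant and vanishes in the amortized limit.

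\textbf{Clause 3 (High Capability + High Performance $\Rightarrow$ security degrades).} This is the contrapositive of Clause 1. If $P > 1/2$, then $V < W$. By the Clause 1 bound, the verifier cannot fully simulate every execution, so some semantic deviations go unchecked. A graded version comes from Theorem~\ref{thm:bounded_divergence}: if only $n$ replays fit in the budget $V$, the undetected divergence fraction is bounded only by $1-\alpha^{1/n}$, and this bound grows as $V$ shrinks.

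\textbf{Main obstacle.} The hard step is the quantitative $V \ge W$ in Clause 1. Rice's theorem alone gives undecidability, not a factor-of-one cost lower bound, so I would fall back to a worst-case argument. Assume the verifier must be correct on every program in the class. Apply it to programs engineered so that their output depends on the full computation, for example by hashing every intermediate state. A verifier running in fewer steps would then contradict the time hierarchy theorem. Because the constant $1/2$ is model-dependent, I would state it up to constant factors.
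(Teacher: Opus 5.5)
\textbf{Comparison with the paper.} Your Clause~2 matches the paper's. Clauses~1 and~3 take a different route. The paper's Clause~1 uses neither Rice nor a hierarchy theorem. It says re-execution is the only universal check and then posits that $M$ is incompressible, i.e.\ has no shortcut below its execution cost. So $W_{\mathrm{verify}}\ge W_{\mathrm{task}}$ is a modeling assumption, and $P\le 1/2$ (or $1/(k+1)$ with $k$ BFT verifiers) is just arithmetic. The paper uses Rice only in Clause~3, applied to the semantic property ``the output is safe'' of the adversary's program. It pairs this with the sampling observation that if only $n$ of $N$ outputs are replayed, the adversary can deviate on the other $N-n$. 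Your Clause~3 (contrapositive of Clause~1 plus the $1-\alpha^{1/n}$ bound) is compatible with this, but it inherits exactly what your Clause~1 proves.

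\textbf{The gap in Clause~1.} Your Clause~1 does not establish $V\ge W$.
\begin{itemize}
\item The deterministic time hierarchy only separates $\mathrm{DTIME}(f)$ from $\mathrm{DTIME}(o(f/\log f))$. Linear speedup makes constant factors meaningless for Turing machines. So a verifier running in $W/2$, or even $W/\log W$, steps contradicts nothing. On a hard Boolean-output instance you can honestly get $V=\Omega(W/\log W)$ infinitely often. That gives only $P\le \log W/(\log W+c)$, a bound that tends to $1$. Constant-factor hierarchies (available for RAM models) improve this only to $P\le 1-\Omega(1)$, still not $1/2$.
\item Your ``fast verifier $\Rightarrow$ fast predictor'' step needs single-bit outputs. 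For a supplied multi-bit $y$, checking $y\stackrel{?}{=}M(x)$ can be far cheaper than computing $M(x)$; take $M$ outputting the prime factorization of $x$. Hashing intermediate states is a random-oracle heuristic, not a lower bound. The clean repair is to restrict to a Boolean-output hard language.
\item Rice contributes nothing to Clause~1. G2 compares an observed, hence halted, output against a fixed declared computation, and that is decidable by re-running it. Undecidability of static analysis implies no cost bound on dynamic checking.
\end{itemize}

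\textbf{What you actually get.} Your route yields a true but much weaker asymptotic statement. The constant $1/2$ comes from the cost model itself, namely your own remark that \texttt{ReplayVerify} re-executes once per interaction. Combined with the paper's assumption that $M$ is incompressible, that gives $1/2$; complexity theory alone does not. To recover the stated bound, you must adopt that assumption explicitly, as the paper does.
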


\textbf{Proof sketch for (1).} Behavioral verification (G2) requires determining whether the agent's output $y$ matches what its declared model $M$ would produce on input $x$. For general computations, the only universal method is \emph{re-execution}: compute $M(x)$ and compare. No sub-linear shortcut exists for arbitrary $M$, because $M$'s output depends on the full complexity of its parameters and computation---the function $M$ may be incompressible (its Kolmogorov complexity approaches its execution cost). Therefore $W_{\mathrm{verify}} \geq W_{\mathrm{task}}$, giving performance ratio $P = W_{\mathrm{task}} / (W_{\mathrm{task}} + W_{\mathrm{verify}}) \leq 1/2$. For Byzantine-fault-tolerant verification with $k$ independent verifiers, $P \leq 1/(k+1)$.

\textbf{Proof sketch for (2).} When each tool $t_i$ has a finite input space $\mathcal{I}_i$ and output space $\mathcal{O}_i$, the set of valid $(x, t_i(x))$ pairs is bounded by $\sum_i |\mathcal{I}_i|$. A hash table of valid pairs can be pre-computed offline and queried in $O(1)$ per verification. Since the pre-computation is amortized over all future queries, the marginal verification cost is constant, and $P \to 1$ as the number of task executions grows.

\textbf{Proof sketch for (3).} This follows from Rice's theorem: ``the output of agent $A$ is safe'' is a non-trivial semantic property of $A$'s program, and no algorithm can decide all such properties for Turing-complete programs. When the verification budget $W_{\mathrm{verify}} < W_{\mathrm{task}}$, the verifier cannot re-execute all computations. Sampling-based verification (checking $n$ out of $N$ outputs) provides statistical guarantees bounded by our Bounded Divergence Theorem ($\epsilon \leq \ln(1/\alpha)/n$), but the adversary can always deviate on the $N - n$ unchecked outputs. Static analysis of agent programs is undecidable in the general case.

\textbf{How TEE sidesteps the trilemma.} Hardware TEE replaces computational verification with hardware attestation, shifting the tradeoff from the compute domain to the trust assumption domain (reliance on hardware vendor integrity and supply chain security). TEE does not break the trilemma but trades compute cost for trust assumptions, and satisfies only G2---a TEE-attested agent still requires G1 and G3.

\textbf{Analogy to established impossibility results.} The trilemma is structurally analogous to the CAP theorem~\cite{cap_theorem} and the blockchain trilemma; we conjecture that a formal proof reducing to standard computability results is possible (\S\ref{sec:future}).

\subsection{Governance Levels: Operationalizing the Trilemma}
\label{sec:governance_levels}

The trilemma motivates three governance levels (Level~3: compile-time safety with bounded tools; Level~2: sampled verification for enterprise deployments; Level~1: post-hoc traceability for low-risk use cases), each corresponding to a different position in the capability-performance-security space. Full specifications including recommended crypto backends are given in Appendix~\ref{app:governance_levels}. The key insight is that every deployment's \emph{security posture is legible}: the governance level is declared in the agent's certificate and recorded in the ledger, making the tradeoff explicit to auditors and downstream agents.

\subsection{Capability Boundedness and Meta-Tools}
\label{sec:capability_boundedness}

The trilemma's capability dimension distinguishes two tool classes. \textbf{Bounded capabilities} (\texttt{web\_search}, \texttt{read\_file}) have finite or structurally constrained input-output spaces; G1 fully characterizes the envelope and verification is efficient. \textbf{Unbounded capabilities} (\texttt{code\_execution}, \texttt{shell\_access}) are Turing-complete: G1 correctly records that the agent \emph{has} the capability, but the envelope is \emph{formally declared yet semantically unbounded}---knowing an agent can execute code does not predict what code it will execute.

For unbounded tools, governance either restricts to bounded mode (sandboxed execution, Level~3) or accepts unbounded mode with runtime monitoring ($P \leq 1/2$ per the trilemma, as the monitor consumes at least as much compute as the governed agent). G1--G3 provides the infrastructure for both: G1 declares boundedness, G2 verifies behavioral fidelity within the envelope, and G3 records all interactions for post-hoc audit.

\subsection{Capability Binding Limitations}

\textbf{Closed-source binding.} For closed-source models and tools, the skills hash captures configuration metadata (name, version, API schema), not implementation code. A provider could change implementation without changing the public descriptor. Reproducibility verification provides a complementary detection mechanism: implementation changes that alter behavior will be detected via replay divergence.

\textbf{Re-signing frequency.} Mitigated through wildcard tool categories (\texttt{web\_search@\^{}2.0}) and short-lived auto-renewable certificates. A typical deployment (10 tools, weekly updates) incurs ${\sim}520$ certificate operations per year at $<$100\,\textmu s each---operationally negligible.

\subsection{Graceful Degradation under Partial G2}

When G2 is unavailable (reproducibility commitment $= \mathit{none}$), the framework degrades to G1+G3 only. Table~\ref{tab:degradation} summarizes the security impact.

\begin{table}[H]
\centering
\caption{Security guarantees under G1+G2+G3 vs.\ G1+G3 degradation.}
\label{tab:degradation}
\small
\begin{tabular}{@{}lcc@{}}
\toprule
\textbf{Attack class} & \textbf{G1+G2+G3} & \textbf{G1+G3} \\
\midrule
Silent capability escalation (ATK-1) & Prevented & Prevented \\
Tool trojanization (ATK-3) & Prevented & Prevented \\
Phantom delegation (ATK-4) & Prevented & Prevented \\
Chain/depth violation (ATK-5,6) & Prevented & Prevented \\
Evidence tampering (ATK-8) & Detected & Detected \\
Blame shifting (ATK-9) & Resolved & Resolved \\
Credential collusion (ATK-7) & Prevented & Prevented \\
\midrule
Model substitution (ATK-2) & \textbf{Detected} & \ding{55} \\
Output exfiltration (ATK-12) & \textbf{Detected} & \ding{55} \\
\bottomrule
\end{tabular}
\end{table}

G1+G3 alone protect against 9 of 12 attack types. The attacks requiring G2---model substitution (ATK-2), cascading injection via model compromise (ATK-10/11), and output exfiltration (ATK-12)---share a common structure: the attacker operates \emph{within} a valid capability envelope by substituting the computational process itself.

\textbf{Chain-level degradation.} A G2$=$\textsc{none} agent at an interior position breaks behavioral verification for all downstream nodes (Chain Verifiability Theorem, \S\ref{sec:chain_verifiability}); the deployment principles therein provide prescriptive architectural guidance.

\textbf{Statistical vs.\ computational security.} Token-level reproducibility provides statistical rather than computational security, but the Bounded Divergence Theorem (\S\ref{sec:indistinguishability}) bounds this gap: an adversary passing $n$ replay checks has produced a model $(n, \epsilon)$-indistinguishable from the original, with $\epsilon \leq \ln(1/\alpha)/n$.

\textbf{Verification metric limitations.} Current textual similarity metrics (CharMatch, Jaccard) have lower separation power for reasoning models with high paraphrase variation. The architecture is metric-agnostic: any similarity function satisfying the threshold separation property can be substituted, and semantic similarity (embedding cosine distance) is a natural extension.

\textbf{Provider cooperation.} Basic G2 requires access to the same model version, mitigated by negotiated replay rights, deterministic inference providers~\cite{eigenai}, or the enhanced G2 instantiation (DV-SNARK, \S\ref{sec:dv_snark}) which reduces cooperation to one-time proof generation.

\subsection{Trust Infrastructure Limitations}

\textbf{Non-agent node compromise.} We assume non-agent nodes (CAs, humans, organizations) are uncompromised. Standard PKI mitigations apply (CT logs, TPM~2.0 trust anchors, multi-party issuance, short certificate lifetimes); our framework inherits these risks and mitigations from the PKI foundation.

\textbf{Ledger operator trust.} The current G3 instantiation assumes a centralized operator honest for availability and ordering; record \emph{integrity} is enforced cryptographically (bilateral signatures, hash chains) regardless. Alternative instantiations (multi-operator consensus, transparency logs, blockchain anchoring) can relax the availability assumption at the cost of latency, preserving G3 semantics (\S\ref{sec:ledger}).

\textbf{Application-layer data flow.} Our framework governs agent \emph{identity and capability}, not the \emph{content} of data flows. Complementary approaches (data-flow taint tracking, differential privacy) address this orthogonal concern.

\subsection{Future Work}

\label{sec:future}
\textbf{Formalizing the trilemma.} A rigorous formalization reducing to Rice's theorem and simulation bounds---analogous to the Gilbert--Lynch proof of CAP (2002)---would establish the trilemma as a foundational constraint on agent governance.

\textbf{Hardware-attested identity.} TPM~2.0 and TEE~\cite{sgx_explained,arm_trustzone} integration for stronger G2 guarantees without provider cooperation, addressing the trilemma by substituting compute-domain verification with trust-domain attestation.

\textbf{Stealthy model substitution.} Our adversarial experiment (Experiment~6) shows QLoRA-based backdoors are trivially detected. More sophisticated attacks (distillation-based backdoors, gradient-masked training) may produce stealthier substitutions; characterizing the fundamental limits of replay-based detection is an open question.

\textbf{Semantic verification metrics.} Embedding-based semantic similarity for reproducibility verification with formal threshold analysis, potentially extending G2's applicability to low-determinism reasoning models.

\textbf{IVC for G3.} An enhanced G3 using Incrementally Verifiable Computation (e.g., Nova folding) would enable $O(1)$ audit verification regardless of chain length, deferred as current scales do not justify the ${\sim}200\times$ append overhead.

\textbf{Cross-protocol governance.} Extending capability binding beyond MCP/A2A to emerging agent protocols.

\textbf{Dynamic trust calibration.} Trust scores evolving based on ledger history, enabling governance levels to adapt automatically based on observed behavior.

\section{Related Work}
\label{sec:related}

We organize related work along the four-layer landscape.

\textbf{Identity Platforms.} Keyfactor~\cite{keyfactor} and HID Global~\cite{hid} propose X.509 for agents but without tool binding. SPIFFE/SPIRE~\cite{spiffe_hashicorp,spiffe_solo} provides workload identity. CyberArk~\cite{cyberark} extends PAM to agents.

\textbf{Authorization Frameworks.} OIDC-A~\cite{oidc_a} extends OpenID Connect with agent claims. South et al.~\cite{south_delegation} propose Agent-ID Tokens (ICML 2025). Agentic JWT~\cite{agentic_jwt} defines delegation with temporal constraints.

\textbf{Runtime Gateways.} Gravitee~\cite{gravitee} and Lasso~\cite{lasso} provide MCP proxies with per-call enforcement---fundamentally stateless.

\textbf{Governance Platforms.} Credo AI~\cite{credo_ai} maintains agent registries at the documentation layer.

\textbf{Verifiable Execution.} BAID~\cite{baid} is a high-assurance instantiation of the G2 governance requirement (\S\ref{sec:reproducibility}) within our framework: it uses full-inference zkVM SNARKs to achieve computational integrity without TEE hardware, cryptographically binding agent execution to a code commitment $C_P$, thereby satisfying G2 and partially G1 via code-level identity. However, BAID's $C_P$ does not encompass dynamically acquired tools---an agent's program binary remains unchanged when new tools are discovered via MCP or plugin registries, leaving the capability envelope outside its security perimeter (\S\ref{sec:intro_landscape}). Our framework is complementary rather than competing: BAID provides no G3 guarantee (no tamper-evident cross-agent audit trail) and incomplete G1 coverage for dynamic capabilities, whereas our framework delivers the complete G1$\wedge$G2$\wedge$G3 governance architecture with BAID serving as one pluggable G2 backend for deployments requiring maximum cryptographic assurance. For deployments prioritizing practicality, our DV-SNARK path (\S\ref{sec:dv_snark}) achieves privacy-preserving behavioral verification at ${\sim}690\times$ lower cost (${\sim}174$\,ms vs.\ ${\sim}120$\,s) by SNARKing only the \emph{verification metric} (${\sim}10^4$ constraints) rather than the full model forward pass (${\sim}10^{10}$ constraints); both paths satisfy the same G2 governance contract with different cost-assurance tradeoffs. DiFR~\cite{difr} demonstrates inference near-determinism. EigenAI~\cite{eigenai} achieves 100\% deterministic inference. Attestable Audits~\cite{attestable} uses TEEs~\cite{sgx_explained} for verifiable AI benchmarks.

\textbf{Classical Trust Management.} PolicyMaker~\cite{policymaker}, KeyNote~\cite{keynote}, SPKI/SDSI~\cite{spki}, F-PKI~\cite{fpki}, ARPKI~\cite{arpki}.

\textbf{Agent Protocols.} A2A~\cite{a2a_spec} provides agent communication. MCP~\cite{mcp_spec} enables agent-tool interaction with OAuth~2.1. Two recent works address MCP-specific security gaps that validate our G1 requirement. ETDI~\cite{etdi} extends MCP with cryptographic signing of tool definitions and immutable versioning, directly mitigating tool squatting and rug-pull attacks---an MCP-specific instantiation of capability integrity. MCPSec~\cite{mcpsec} identifies the \emph{absence of capability attestation} as a critical protocol vulnerability, a finding that independently corroborates our capability-identity gap analysis (\S\ref{sec:intro_landscape}). Both works are protocol-specific engineering solutions: neither articulates the capability-context separation as a general principle, addresses behavioral verifiability (G2), nor provides interaction auditability (G3). Our framework subsumes their observations as special cases of G1 while providing the complete G1$\wedge$G2$\wedge$G3 governance architecture.

\textbf{Agent Governance Frameworks.} Cheng et al.~\cite{cheng_safe} propose a three-pillar model (transparency, accountability, trustworthiness) as a conceptual governance framework. Allegrini et al.~\cite{allegrini_formal} formalize 31 task-lifecycle properties in CTL/LTL but address operational correctness rather than cryptographic governance. Ghosh et al.~\cite{ghosh_framework} propose an operational safety framework with agent red-teaming probes (ARP), policy-based tool scoping, and OpenTelemetry-based audit trails, validated through a 10,796-trace dataset (Nemotron-AIQ) covering 22 security attack templates and 13 content-safety categories against NVIDIA's AI-Q research assistant. Their empirical results demonstrate that \emph{without} guardrails, all 2,596 security attacks succeed (100\%), and 20\% of 2,800 safety attacks propagate to the final output---providing compelling evidence for the inadequacy of relying solely on model alignment. However, their OTEL-based audit trails lack cryptographic integrity: span linkage is purely referential (no hash chaining), spans carry no agent identity binding (no certificates), and full plaintext I/O is stored (no privacy preservation). Our framework addresses these gaps with a formally verified, hash-chain-linked, bilaterally signed interaction ledger (Section~\ref{sec:ledger}). Furthermore, their ``signed schemas'' and ``per-call capability tokens'' remain design principles without cryptographic specification or formal proofs. Raza et al.~\cite{trism} adapt TRiSM for agentic multi-agent systems.

\textbf{Instruction-Data Separation.} A related line of work addresses the model's inability to distinguish instructions from data \emph{within} the transformer. The Instruction Hierarchy~\cite{instruction_hierarchy} trains models to prioritize system prompts over user inputs and tool outputs, establishing trust levels per input type. However, this operates at the \emph{model} level via fine-tuning---it does not address the \emph{orchestration-layer} problem that our capability-context separation targets: tool definitions and runtime context have different governance lifecycles (infrequent re-authorization vs.\ per-interaction auditing) that no amount of model training can enforce. Our separation is complementary: even a perfectly instruction-following model requires external cryptographic infrastructure to detect when its capability envelope has changed (G1), verify its computational process (G2), and produce tamper-evident interaction records (G3).

\textbf{Regulatory Landscape.} The EU AI Act (Regulation 2024/1689) requires traceability but lacks agent-specific implementing acts~\cite{hannecke}. NIST CAISI~\cite{nist_caisi} is developing agent standards. CSA~\cite{csa_vouches,csa_trust} proposes agentic trust frameworks.

\section{Conclusion}
\label{sec:conclusion}

We identified \emph{silent capability escalation} as a new vulnerability class in AI agent ecosystems and traced its root cause to a missing architectural distinction: the \textbf{capability-context separation}, which recognizes that tool definitions and runtime context have fundamentally different security semantics at the orchestration layer. From this principle, we derived three Agent Governance Requirements (G1--G3) that define \emph{what} an agent ecosystem must enforce, independent of \emph{how}---a governance architecture, not new cryptographic primitives.

Two structural results have prescriptive architectural consequences. The \textbf{Chain Verifiability Theorem} establishes that behavioral verification is a chain property: one unverifiable interior agent breaks end-to-end verification for all downstream nodes, constraining how multi-agent systems must be architected. The \textbf{Bounded Divergence Theorem} transforms replay-based verification into a probabilistic safety certificate, establishing software G2 as a deployable approximation of trusted execution. A multi-provider reproducibility study (9 models, 7 providers) reveals $5.8\times$ variance in inference determinism, directly connecting model characteristics to governance architecture and elevating reproducibility to a first-class model performance metric.

We validated the framework with two crypto-agnostic instantiations---basic (Ed25519, SHA-256, hash chains; 97\,\textmu s verify) and enhanced (BBS+ selective disclosure, DV-SNARK; 13.8\,ms verify)---both satisfying the same nine security properties. End-to-end evaluation over 5--20 agent pipelines with real LLM calls confirms $<$0.02\% governance overhead and real-time detection of 7 end-to-end attack scenarios (covering G1, G2, and G3) with zero false positives, while three baselines (no governance, OAuth~2.1, OpenTelemetry) detect none. An additional 9 structural attacks are validated via automated unit tests, and 3 runtime behavioral attacks are validated via post-hoc forensic tracing.

The Agent Governance Trilemma (\S\ref{sec:trilemma}) reveals why no single configuration suffices: agents with unbounded capabilities (code execution, tool composition) require verification at least as expensive as the original execution, forcing an explicit tradeoff between capability, performance, and security. G1--G3 provides the infrastructure for the full governance spectrum---from compile-time safety (bounded tools, full verification) to post-hoc traceability (unbounded tools, audit-only)---making each deployment's security posture explicit and auditable.

As the EU AI Act's high-risk provisions take effect and NIST develops agent-specific standards, the need for cryptographic infrastructure enforcing G1--G3 will only grow. By making reproducibility a prerequisite for behavioral verifiability, our framework creates a concrete incentive for model providers to self-declare inference determinism characteristics---elevating reproducibility from an implementation detail to a first-class, security-critical model specification. We release the core mechanisms as an open-source SDK (\texttt{agent-trust-kit}) to enable ecosystem adoption.

\bibliographystyle{plain}

\appendix
\section{Proofs of Security Properties}
\label{app:proofs}

This appendix contains the full proofs for all theorems stated in \S\ref{sec:proofs}.

\begin{proof}[Proof of Property~\ref{thm:trust_containment} (Trust Containment)]
Suppose $\mathcal{A}$ issues $\mathit{cert}_w$ with $\kappa_w \not\leq_\kappa \kappa_v$. The verifier checks $\kappa_w \leq_\kappa \kappa_v$ using constraints in $\mathit{cert}_v$, which is signed by $v$'s parent. $\mathcal{A}$ cannot modify $\mathit{cert}_v$ without the parent's key. Recursing upward, the chain terminates at an uncompromised non-agent ancestor (EUF-CMA). \qed
\end{proof}

\begin{proof}[Proof of Property~\ref{thm:capability_binding} (Capability Binding)]
By collision resistance: $\Pr[H_v(t') = H_v^{\mathit{cert}}] \leq \mathrm{negl}(\lambda)$ when $S_v(t') \neq \sigma_v$. Forging a new certificate requires the parent's key (same argument as Property~\ref{thm:trust_containment}). \qed
\end{proof}

\begin{proof}[Proof of Property~\ref{thm:chain_integrity} (Chain Integrity)]
Fabricating any link requires the corresponding private key. By EUF-CMA, $\mathcal{A}$ cannot produce valid signatures without the key. \qed
\end{proof}

\begin{proof}[Proof of Property~\ref{thm:depth} (Delegation Depth Enforcement)]
Trust Containment requires $\kappa_v.\mathit{max\_depth} < \kappa_{\mathit{par}(v)}.\mathit{max\_depth}$ (strict). Starting from depth $d$, the chain has at most $d$ agent levels before reaching 0. \qed
\end{proof}

\begin{proof}[Proof of Property~\ref{thm:reproducibility} (Reproducibility Soundness)]
\emph{Hardware G2}: TEE attestation directly proves that the declared model was executed; substitution is detected with certainty. \emph{Software G2}: By DiFR~\cite{difr}, for different models the empirical match rate is $\delta \approx 0.06$ (character-level), validated in our Experiment~5 across 9 models and 7 providers. Our empirical data shows that the fraction of cross-model output pairs exceeding any reasonable threshold $\theta$ (e.g., $\theta = 0.15$) is at most $q \approx 0.02$ (i.e., $\Pr[\mathrm{CharMatch}(M(x), M'(x)) \geq \theta \mid M' \neq M] \leq q$). Each verification prompt is therefore an independent Bernoulli trial with pass probability at most $q$. The probability that a substituted model passes all $n$ verification prompts is at most $q^n$, which is negligible for realistic $n$ (e.g., $q^{10} \leq 10^{-17}$). \qed
\end{proof}

\begin{proof}[Proof of Theorem~\ref{thm:bounded_divergence_proof} (Bounded Divergence Under Verification)]
Let $p = \Pr_{x \sim \mathcal{D}}[\mathrm{CharMatch}(M(x), M'(x)) < \theta]$ be the true failure rate. Each verification trial $Z_i = \mathbf{1}[\mathrm{CharMatch}(M(x_i), M'(x_i)) \geq \theta]$ is an independent Bernoulli$(1-p)$ random variable. Observing all $n$ trials pass ($\sum Z_i = n$), we seek an upper confidence bound on $p$.

The probability of observing $n$ successes when the true failure rate is $p$ is $(1-p)^n$. The exact one-sided Clopper-Pearson upper bound at confidence level $1-\alpha$ is the value $p^*$ satisfying $(1-p^*)^n = \alpha$, giving $p^* = 1 - \alpha^{1/n}$.

For small $p^*$, using the approximation $\ln(1-p) \approx -p$: from $(1-p)^n = \alpha$ we get $n \ln(1-p) = \ln \alpha$, hence $p \approx \ln(1/\alpha)/n$.

This is a distribution-free result: the bound depends only on the number of verification trials $n$ and the desired confidence level $\alpha$, not on the threshold $\theta$ or any distributional assumptions on the CharMatch statistic. The role of $\theta$ is solely to define what constitutes a ``pass''---a higher $\theta$ increases sensitivity but may also increase false positives for the legitimate model. \qed
\end{proof}

\begin{proof}[Proof of Property~\ref{thm:isolation} (Credential Isolation)]
Access depends solely on the requesting agent's certificate. No protocol mechanism combines permissions. Proxy tokens are bound to specific agent IDs. \qed
\end{proof}

\begin{proof}[Proof of Property~\ref{thm:ledger} (Ledger Integrity)]
(a) Modification invalidates $h_{\mathit{prev}}$ of successor, cascading through chain; re-signing requires all subsequent senders' keys. (b) Deletion breaks hash chain. (c) Reordering breaks both sequence numbers and hash links. \qed
\end{proof}

\begin{proof}[Proof of Theorem~\ref{thm:chain_verifiability_proof} (Chain Verifiability)]
We proceed by induction on chain position.

\emph{Base case} ($i = k+1$): Agent $v_k$ has $\rho.\mathit{level} = \mathit{none}$. Under the adversary model (\S\ref{sec:threat}), $\mathcal{A}$ may have replaced $v_k$'s model $M_k$ with $M_k'$. Since no replay verification applies to $v_k$, the verifier cannot distinguish $v_k$'s actual output $o_k' = M_k'(\mathit{input}_k)$ from the legitimate output $o_k^* = M_k(\mathit{input}_k)$. Both are validly signed by $v_k$ (G3$\checkmark$) and $v_k$'s certificate is valid (G1$\checkmark$). Agent $v_{k+1}$ receives $o_k'$ as input. Even with $\rho_{v_{k+1}}.\mathit{level} = \mathit{full}$, replay verification confirms $v_{k+1}(o_k') = o_{k+1}$---i.e., $v_{k+1}$ faithfully processed $o_k'$. But since $o_k'$ may differ from $o_k^*$, the computation is correct on potentially corrupted input. The verifier cannot detect this because verifying $v_{k+1}$'s replay requires $v_k$'s output as the reference input, which is itself unverifiable.

\emph{Inductive step}: Assume the output of $v_{i-1}$ ($i-1 > k$) cannot be verified as legitimate. Agent $v_i$ receives this unverifiable output as input. By the same argument as the base case, $v_i$'s replay verification confirms computational fidelity on unverifiable input. \qed
\end{proof}

\begin{proof}[Proof of Property~\ref{prop:g2_input_integrity} (G2 Input Integrity via Bilateral Signing)]
Let $v_j$ be a compromised agent and let $R_i = (\ldots, a_{\mathit{send}}, a_{\mathit{recv}}, \ldots, \eta_{\mathit{in}}, \ldots, \mathit{sig}_i)$ be the ledger record for the interaction delivering input to $v_j$, where $a_{\mathit{send}} = v_{j-1}$ and $a_{\mathit{recv}} = v_j$.

The record $R_i$ is bilaterally signed: $\mathit{sig}_i$ includes signatures from both $v_{j-1}$ and $v_j$. The input commitment $\eta_{\mathit{in}} = \mathrm{SHA\text{-}256}(\mathit{input})$ is covered by $v_{j-1}$'s signature.

\emph{Claim}: $v_j$ cannot cause the verifier to use a fabricated input $\mathit{input}' \neq \mathit{input}$ for replay verification.

Suppose $v_j$ attempts to present $\mathit{input}'$ during forensic reconstruction. The verifier computes $\mathrm{SHA\text{-}256}(\mathit{input}')$ and compares against $\eta_{\mathit{in}}$ in the ledger record. By collision resistance of SHA-256, $\mathrm{SHA\text{-}256}(\mathit{input}') = \eta_{\mathit{in}}$ with negligible probability when $\mathit{input}' \neq \mathit{input}$. Moreover, modifying $\eta_{\mathit{in}}$ in the record $R_i$ invalidates $v_{j-1}$'s signature (EUF-CMA of Ed25519), and is further detectable via hash-chain integrity ($h_{\mathit{prev}}$ of $R_{i+1}$ would not match). When $v_{j-1}$ is honest (Cases~1 and~2 in Property~\ref{prop:g2_input_integrity}), the verifier independently requests $v_{j-1}$ to disclose $\mathit{input}$ and verifies it against $\eta_{\mathit{in}}$, bypassing $v_j$ entirely.

When $v_{j-1}$ is also compromised (Case~3), both parties may collude to present a consistent fabrication. However, Chain Verifiability (Theorem~\ref{thm:chain_verifiability_proof}) already establishes $\mathrm{CVD} \leq j-1 < j$: the chain's verifiable region does not extend to $v_j$, so G2 replay of $v_j$ is not claimed to provide end-to-end guarantees in this configuration. \qed
\end{proof}

\section{Governance Levels}
\label{app:governance_levels}

The Agent Governance Trilemma (Observation~\ref{obs:trilemma}) implies that no single governance configuration is optimal for all deployments. G1--G3 provide the infrastructure for three governance levels, each corresponding to a different position in the capability-performance-security space:

\textbf{Level 3: Compile-Time Safety} (high security + high performance, bounded capability).
For safety-critical deployments---financial transactions, medical decisions, legal document generation---agents are restricted to \emph{bounded tools only}: no code execution, no tool composition. G1 binds the complete capability envelope; G2 with full replay verification ($n \geq 200$, $\epsilon < 2.3\%$ at 99\% confidence) ensures behavioral fidelity; G3 provides complete audit trails. \emph{Recommended crypto backend}: Enhanced G1 (BBS+ with predicate proofs, \S\ref{sec:bbs_plus}) + Enhanced G2 (DV-SNARK, \S\ref{sec:dv_snark}).

\textbf{Level 2: Sampled Verification} (balanced tradeoff).
For enterprise deployments---data processing, customer service, code generation---agents may use selected unbounded tools, but G2 operates in \emph{statistical} mode with periodic sampling. A verification budget of $n = 50$ per audit cycle bounds adversarial divergence to $\epsilon < 8.9\%$ (99\% confidence). G1 declares the capability envelope; G3 ensures traceability. \emph{Recommended crypto backend}: Enhanced G1 (BBS+ selective disclosure) + Basic G2 (direct CharMatch).

\textbf{Level 3: Post-Hoc Traceability} (high capability + high performance, reduced security).
For low-risk deployments---writing assistance, creative design, information retrieval---agents operate with full capability and minimal runtime overhead. G2 is set to $\mathit{none}$; G1 provides basic capability declaration; G3 provides lightweight interaction traces. Security relies on post-hoc forensic reconstruction. \emph{Recommended crypto backend}: Basic G1 (Ed25519).

The governance level is declared in the agent's certificate and recorded in the ledger, making each deployment's security posture explicit to auditors, compliance systems, and downstream agents.

\end{document}